\crefname{algocf}{Algorithm}{Algorithms}
\Crefname{algocf}{Algorithm}{Algorithms}
\crefname{@theorem}{Theorem}{Theorems}
\Crefname{@theorem}{Theorem}{Theorems}
\crefname{claim}{Claim}{Claims}
\newcommand{\old}[1]{{}}
\newcommand{\atsp}{\alpha_{\mbox{\scriptsize{TSP}}}}
\newtheorem{problem}{Problem}
\newcommand{\Pol}{P}
\newcommand{\T}{T}
\newcommand{\Topt}{T^*}
\newcommand{\R}{\mathbb{R}}
\newcommand{\cupdot}{\mathbin{\mathaccent\cdot\cup}}
\DeclareMathOperator*{\argmin}{arg\,min}
\newcommand{\ceil}[1]{\left\lceil #1 \right\rceil}
\begin{document}

\newcommand\relatedversion{}

\title{\Large A Closer Cut:\\ Computing Near-Optimal Lawn Mowing Tours\relatedversion}
\author{Sándor P.~Fekete\thanks{Department of Computer Science, TU Braunschweig, Braunschweig, Germany}
\and Dominik Krupke\footnotemark[1]
\and Michael Perk\footnotemark[1]
\and Christian Rieck\footnotemark[1]
\and Christian Scheffer\thanks{Faculty of Electrical Engineering and Computer Science, Bochum University of Applied Sciences, Bochum, Germany}
}
\date{}

\maketitle


\fancyfoot[R]{\scriptsize{Copyright \textcopyright\ 2022\\
Copyright for this paper is retained by authors.}}

\begin{abstract}
For a given polygonal region $\Pol$, the Lawn Mowing Problem (LMP) asks for a shortest
tour $\T$ that gets within Euclidean distance 1 of every point in $\Pol$; this is
equivalent to computing a shortest tour for a unit-disk cutter $C$ that covers all
of $\Pol$. As a geometric optimization problem of natural
 practical and theoretical importance, the LMP generalizes and combines several notoriously
difficult problems, including minimum covering by disks,
the Traveling Salesman Problem with neighborhoods~(TSPN), and the $\exists\R$-complete
Art Gallery Problem (AGP).
So far, there have only been theoretical approximation algorithms with worst-case
bounds of $2\sqrt{3}\atsp\approx 3.46 \atsp$,
where $\atsp$ is the approximation factor for the geometric TSP\@. Here,
$\atsp=1+\varepsilon$ is theoretically possible by using one of the famous
geometric approximation schemes; however, these methods are not practically applicable
for concrete instances. Moreover, there have not been any exact methods for the LMP
that compute provably near-optimal solutions for instances of interesting size, owing to
the combination of geometric difficulties, such as a succinct characterization of optimal solutions,
as well as the lack of useful lower bounds that provide practically small performance gaps.

In this paper, we conduct the first study of the Lawn Mowing Problem with a focus
on practical computation of near-optimal solutions. To this end, we provide
new theoretical insights: Optimal solutions are polygonal paths with a bounded
number of vertices, i.e., they do not have any curved pieces, allowing a restriction
to straight-line solutions;  on the other hand, there can be relatively simple instances
for which optimal solutions require a large class of irrational coordinates.
On the practical side, we present a primal-dual approach with provable convergence properties
based on solving a special case of the TSPN restricted to \emph{witness sets}.
In each iteration, this establishes both a valid solution and
a valid lower bound, and thereby a bound on the remaining optimality gap.
As we demonstrate in an extensive computational study, this allows us to achieve
provably optimal and near-optimal solutions for a large spectrum of benchmark
instances with up to \num{2000} vertices.
\end{abstract}

\section{Introduction}

There are many facets of theoretical and practical difficulty of geometric optimization problems.
On the theoretical side, the classic \emph{Traveling Salesman Problem}~(TSP) is NP-hard, making it unlikely that there
is a polynomial-time algorithm that produces provably optimal solutions. Moreover, it
is unknown how to efficiently evaluate a sum of square roots, so it
is unclear whether the TSP for a set of points in the plane with Euclidean distances even belongs
to NP\@. For the famous \emph{Art Gallery Problem} (AGP) of finding a minimum number of
guards to cover a simple polygon based on visibility, membership in NP is indeed unlikely,
as it belongs to the class $\exists\R$. Furthermore, problems of {optimal covering}
are also known to be prohibitively difficult from a \emph{practical} perspective: Even the
largest square that can be covered by $n$ unit disks has only been established up to $n=7$.

We consider a geometric optimization problem that generalizes
and combines these challenges. In the \emph{Lawn Mowing Problem} (LMP),
we are given a (not necessarily simple or even connected) polygonal region $\Pol$
and a disk cutter $C$ of radius $r$; the task is to find a closed roundtrip of minimum
Euclidean length such that the cutter ``mows'' all of $\Pol$, i.e., a~shortest tour that moves
the center of $C$ within distance $r$ from every point in~$\Pol$. The LMP naturally
occurs in a wide spectrum of practical applications, such as robotics,
manufacturing, farming, quality control and image processing, so it is of both
theoretical and practical importance.

\begin{figure*}
\centering
\begin{subfigure}{.19\textwidth}
  \centering
  \includegraphics[width=\textwidth]{./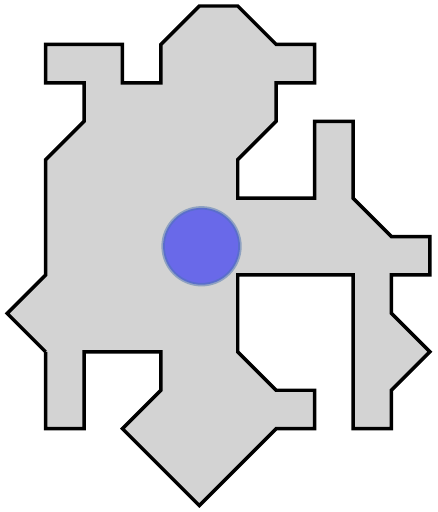}
  \caption{}
\end{subfigure}\hfill
\begin{subfigure}{.19\textwidth}
  \centering
  \includegraphics[width=\textwidth]{./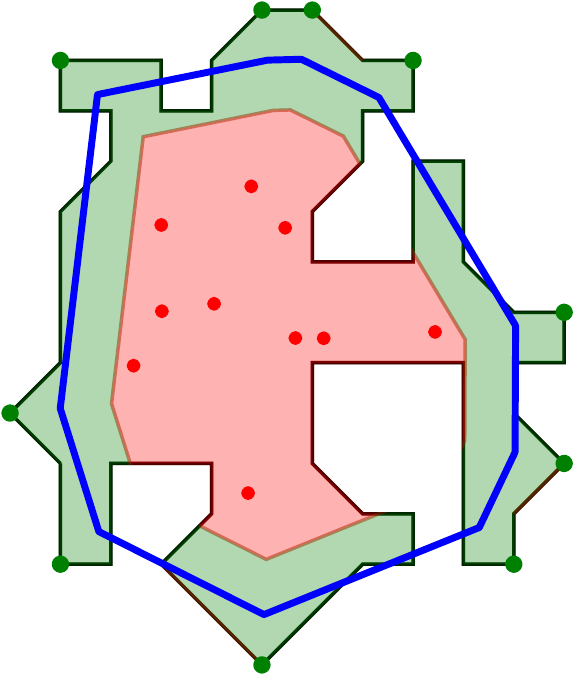}
  \caption{}
\end{subfigure}\hfill
\begin{subfigure}{.19\textwidth}
  \centering
  \includegraphics[width=\textwidth]{./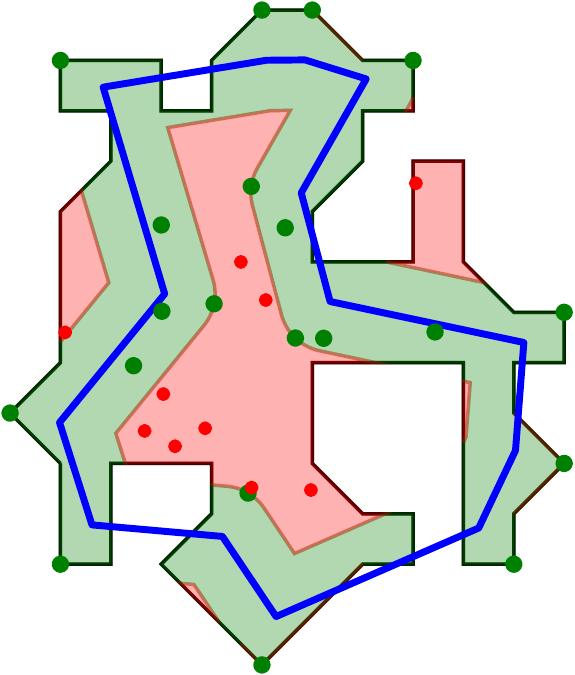}
  \caption{}
\end{subfigure}\hfill
\begin{subfigure}{.19\textwidth}
  \centering
  \includegraphics[width=\textwidth]{./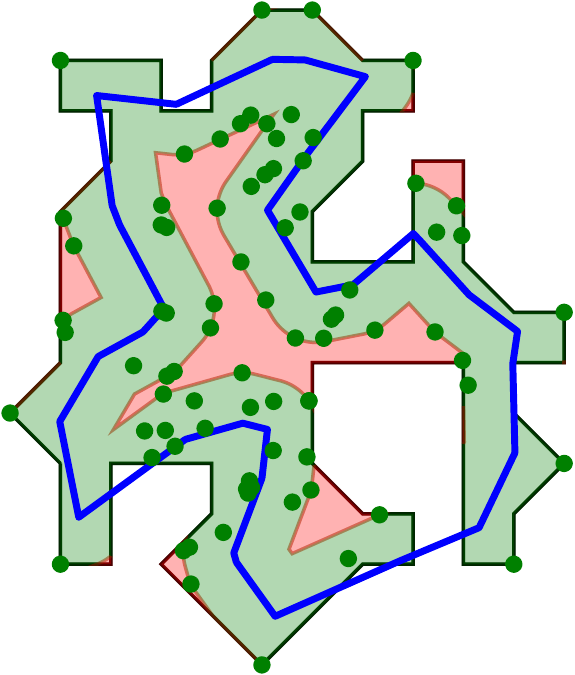}
  \caption{}
\end{subfigure}\hfill
\begin{subfigure}{.19\textwidth}
  \centering
  \includegraphics[width=\textwidth]{./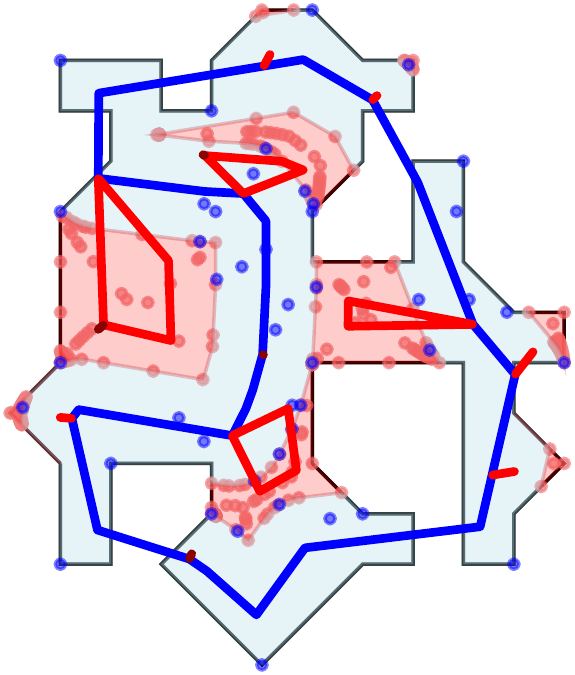}
  \caption{}
\end{subfigure}
  \caption{A sequence of iterations in the primal-dual scheme. \textbf{(a)} An example instance. The blue circle shows the tool size.
     \textbf{(b)} A lower bound of \num{32.4} with coverage of \SI{51}{\percent}, arising in iteration 0 from an initial witness set.
     \textbf{(c)}~A lower bound of \num{36.56} with coverage of \SI{70}{\percent}, arising in iteration 1 from an enhanced witness set.
     \textbf{(d)} A~lower bound of \num{42.37} with coverage of \SI{84}{\percent}, arising in iteration~7.
     \textbf{(e)} An upper bound of \num{65.35} with full coverage, achieved in iteration 3.
  \label{fig:example}}
\end{figure*}

Given that the LMP combines the Euclidean TSP, the AGP, and covering by disks, it is not
surprising that it is also both theoretically and practically difficult: It is NP-hard
(as it generalizes the TSP), membership in NP is unclear (as it involves
evaluating Euclidean distances); it is also extremely hard from a practical perspective
(as it comprises covering by disks). In fact, the only known positive algorithmic result
is an approximation algorithm with a factor of
$2\sqrt{3}\atsp\approx 3.46 \atsp$~\cite{Arkin2000},
where $\atsp$ is the factor for the geometric TSP\@. So far,
no results aiming at methods with \emph{practically} good performance are known, in part
because of the difficulties of (I) characterizing optimal solutions (due
to the continuous nature of the LMP) and of (II) providing
tight lower bounds: Neither the area of $P$ nor its diameter can provide such bounds,
as replacing $P$ by a dense subset of points with zero area does not change the
length of an optimal tour, which can be much longer than the diameter of $P$.

\subsection{Our contribution}
We provide a number of theoretical and practical results for the Lawn Mowing Problem.
\begin{itemize}
\item We establish a characterization of optimal lawn mowing tours by proving
that an optimal tour for a polygonal region $\Pol$ consists of line segments
between a finite set of vertices. This allows a
focus on polygonal solutions, and the ensuing primal-dual scheme.
\item On the other hand, we show that even relatively simple regions $\Pol$
can require LMP solutions with a wide range of irrational vertices. While this does not establish
$\exists\R$-hardness, it gives some indication of the underlying difficulty,
as it did for the AGP\@.
\item We establish a primal-dual algorithm for the LMP by iteratively covering an
expanding \emph{witness set} of finitely many points in $\Pol$. In each iteration,
computing a lower bound involves solving a special case of a TSP instance with neighborhoods,
the \emph{Close-Enough TSP} (CETSP) to provable optimality; for an upper bound, this
is enhanced to provide full coverage. In each iteration, this establishes both a valid solution and
a valid lower bound, and thereby a bound on the remaining optimality gap.
(See \cref{fig:example} for an illustration.)
\item We prove that this discretization method leads to provably good results:
both the uncovered area and the maximum distance of points
from the region swept by our lower-bound tours converge to zero as we enhance the
witness set.
\item We present a comprehensive study to demonstrate the practical usefulness
of our methods, based on a wide spectrum of benchmark instances with up to 2000 vertices.
The outcomes include provably optimal solutions, limited optimality gaps, and improved lower bounds.
\end{itemize}


\subsection{Related work}
There is a wide range of practical applications for the LMP, including manufacturing~\cite{Arkin2000ZigZag,Held1991,Held},
cleaning~\cite{bormann2015new}, robotic coverage~\cite{cabreira2019survey,choset2001coverage,galceran2013survey,jensen2020near},
inspection~\cite{englot2012sampling}, CAD~\cite{elber1999offsets}, farming~\cite{bahnemannrevisiting,choset1998coverage,oksanen2009coverage}
and pest control~\cite{Becker}.
In Computational Geometry, the Lawn Mowing Problem was first introduced by Arkin et al.~\cite{arkin1993lawnmower},
who later gave the currently best approximation algorithm with a performance guarantee of
$2\sqrt{3}\atsp\approx 3.46 \atsp$~\cite{Arkin2000}, where $\atsp$
is the performance guarantee for an approximation algorithm for the TSP\@; while $\atsp$
may be $(1+\varepsilon)$ in theory, based on the methods of
Arora~\cite{arora1998polynomial} or Mitchell~\cite{mitchell1999guillotine}, neither approach is practically useful. A
variant (in which cost was incurred both for traveling and covering) was considered by Fekete et al.~\cite{fms-mctc-10},
who gave a 4-approximation for the special case of polyominoes. Closely related
is the TSP with neighborhoods (TSPN), for which it suffices to visit a neighborhood
for each of a given set of discrete vertices; this was first considered by
Arkin and Hassin~\cite{arkin1994approximation}, in a graph setting by
Gendreau et al.~\cite{gendreau1997covering}, for ``fat'' neighborhoods by
Mitchell~\cite{Mitchell07} and heuristically by Yuan and Zhang~\cite{yuan2017towards}.
A particularly relevant special case of the TSPN is the
\emph{Close-Enough TSP}~(CETSP), in which it suffices to get within a Euclidean
distance of $r$, i.e., for which the neighborhood is an $r$-disk. Dumitrescu and T\'oth~\cite{dumitrescu2017constant}
gave an $O(1)$~approximation; they also provide a broad overview of other theoretical results for the TSPN\@.
Practical methods were considered by Mennell~\cite{mennell2009heuristics}, Behdani and Smith~\cite{behdani2014integer},
and Coutinho et al.~\cite{coutinho2016branch}.

The Art Gallery Problem (AGP) is connected to our work by a combination of theoretical and practical
issues. Also a problem of optimal geometric covering, the AGP has to deal with the theoretical
difficulties of possibly irrational coordinates, as shown by Abrahamsen~et~al.~\cite{abrahamsen2017irrational}
(answering an open problem by Fekete~\cite{agarwal2011computational}); subsequently, this result
served as a stepping stone towards a proof of $\exists\R$-completeness~\cite{AbrahamsenAM22}.
On the practical side, powerful methods (e.g., by Baumgartner et al.~\cite{baumgartner2010exact},
Kr\"oller et al.~\cite{kroller2012exact}, or de Rezende et al.~\cite{rezende2016})
for computing a good set of guards for a polygonal region $\Pol$ are based on
finding solutions for discrete \emph{witness sets} within $\Pol$,
leading to a primal-dual optimization method;
see~\cite{bdd-pgpc-13} for an animated multimedia description. This approach
is closely related to our primal-dual method for the LMP\@.

Optimally covering even relatively simple regions by a set of $n$ unit disks has received
a considerable attention, but is excruciatingly difficult.
For covering rectangles by $n$ unit disks, Heppes and Mellissen~\cite{heppes1997covering}
gave optimal solutions for~$n\leq 5$; Melissen and Schuur~\cite{melissen2000covering} extended this
for $n=6,7$.  See the website by Friedman~\cite{friedman1014} for illustrations of
the best known solutions (only some of which are proven to be optimal) for $n\leq 12$.
As early as 1915, Neville~\cite{neville1915solution} computed the optimal arrangement for covering a disk by five unit disks,
but reported a wrong optimal value; much later, Bezdek~\cite{bezdek1979korok,bezdek1984einige} gave the correct value for $n=5,6$.
As recently as 2005, Fejes T\'{o}th~\cite{toth2005thinnest} established optimal values for $n=8,9,10$.
The question of incomplete coverings was raised
in 2008 by Connelly;
Szalkai~\cite{szalkai2016optimal} gave an optimal solution for $n=3$.
Progress on covering by (not necessarily equal) disks has been achieved by Fekete et al.~\cite{2020-Covering_SoCG,75-Covervideo_SoCG}.


\subsection{Preliminaries}\label{sec:preliminaries}

A~(simple)~\emph{polygon}~$\Pol$ is a (non-self-intersecting) shape in the
plane, bounded by a finite number $n$ of line segments. The \emph{boundary} of
a polygon~$\Pol$ is denoted by $\partial \Pol$. A~\emph{tour} is a closed
continuous curve $T: [0,1] \rightarrow \mathbb{R}^2$ with $T(0) = T(1)$;
we denote the (Euclidean) length of $T$ by $\ell(T)$.
The~\emph{cutter}~$C$ is a disk of radius $r$, centered in its midpoint. The
\emph{Minkowski sum} of two sets $A, B\subset \mathbb{R}^2$ is the set $A\oplus
B = \{a+b\ |\  a\in A,\,b\in B\}$. The \emph{coverage} of a tour $T$ with the
disk cutter $C$ is $T \oplus C$. The coverage of a point $p \in T$ is $\{
p \} \oplus C$. A \emph{lawn mowing tour}~$T$ of a polygon $P$ with a cutter
$C$ is a tour whose coverage contains $P$. An \emph{optimal} lawn mowing tour
$\Topt$ is a lawn mowing tour of shortest length. For a discrete set of points
$P$, a tour $T$ \emph{traverses} $P$ if $P \subset T$, i.e., for each $p \in P$
there is a $t \in [0,1]$ such that $T(t) = p$.


\begin{figure*}[ht]
	\centering
	\includegraphics[width=0.8\linewidth]{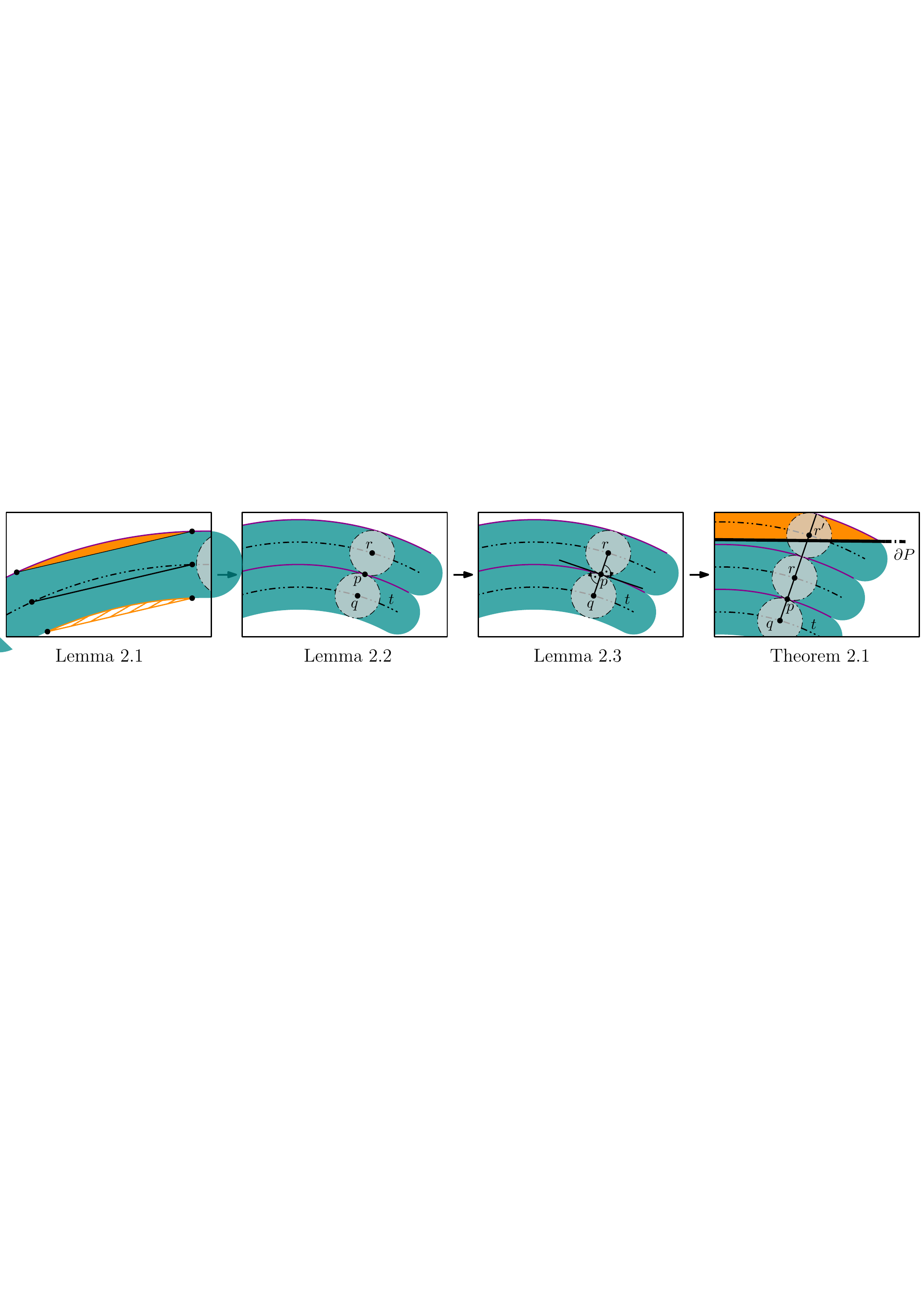}
	\caption{A symbolic overview of our proof that optimal tours only contain line segments.}
	\label{fig:straight-outline}
\end{figure*}

\section{Optimal tours are straight}\label{sec:straight-line-tours}

In general, a lawn mowing tour may consist of line segments and curved arcs. We show that any optimal tour mowing a polygon can only contain line segments.

\begin{restatable}{theorem}{straightlinetours}\label{theorem:straight-line-tours-circles}
	For every polygon $\Pol$ and a circular cutter
	, an optimal lawn mowing tour~$\T$ exclusively consists of a set of line segments.
\end{restatable}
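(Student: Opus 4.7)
My plan is a local replacement argument by contradiction. Suppose an optimal mowing tour $\T$ contains a curved (non-line-segment) subarc $\gamma$ with endpoints $a,b$; I will construct a strictly shorter mowing tour, contradicting optimality. The naive candidate is $\T' = (\T\setminus\gamma) \cup \overline{ab}$, which is strictly shorter than $\T$ by the triangle inequality; if $\T'$ already covers $\Pol$ we are done. Otherwise, the \emph{witness set}
\[
W = \{\, w \in \Pol : d(w,\gamma)\leq r,\ d(w,\T\setminus\gamma)>r,\ d(w,\overline{ab})>r \,\}
\]
of points that rely specifically on $\gamma$ for coverage is nonempty, so any valid replacement of $\gamma$ must still reach within distance $r$ of every $w\in W$.

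The next step is to constrain $W$ via a first-variation argument: every $w\in W$ must satisfy $d(w,\gamma)=r$ exactly, for otherwise a small local perturbation of $\gamma$ toward the chord $\overline{ab}$ preserves coverage of $w$ and shortens the tour, already contradicting optimality. Hence each witness is a tangency point at which $\Pol$ meets the boundary of the coverage region of $\gamma$, a one-dimensional offset curve of $\gamma$. Compactness then forces $W$ to partition into finitely many clusters along this offset curve, each cluster corresponding to a sub-piece of $\gamma$ whose coverage is essential; this gives a discrete, combinatorial handle on an a priori continuous obstruction.

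The heart of the proof is then to construct a polygonal replacement $\gamma'$ from $a$ to $b$ that covers every witness and is strictly shorter than $\gamma$. My candidate is a polygonal detour that follows the chord $\overline{ab}$ but bends toward each witness cluster just enough to bring it within distance $r$. Because the original $\gamma$ can be localized to an arbitrarily short subarc (any strictly curved piece of $\T$ admits such a shrinking), witnesses collect in a thin crescent between $\gamma$ and $\overline{ab}$, so the chord savings $\ell(\gamma)-\ell(\overline{ab})$ should dominate the total spur cost. The main obstacle is precisely this quantitative length accounting: comparing the arc length of the smooth curve $\gamma$ with the length of a polygonal path whose $r$-offset still covers each tangency point. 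I expect this to reduce, in the limit of infinitesimally short $\gamma$, to a Taylor-type inequality showing that bending costs quadratic arc length while buying only linear offset depth, which is strictly unprofitable and therefore rules curvature out of any optimal tour.
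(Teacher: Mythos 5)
Your setup---shortcut the curved arc $\gamma$ by its chord, isolate the witnesses that relied on $\gamma$ for coverage, and argue by first variation that any genuine obstruction must be a tangency point at distance exactly $r$ from $\gamma$---runs parallel to \cref{lemma:straight-line-shorten-tours} and \cref{cor:intersection-cutter-strip} in the paper: the convex side of the arc's coverage strip cannot overlap any other coverage, so the obstructing points of $\Pol$ sit exactly on that offset curve. The fatal problem is your final quantitative step, whose accounting is in fact inverted. For a shallow circular arc with chord length $c$ and sagitta $h$, the length saved by passing to the chord is $\ell(\gamma)-\ell(\overline{ab})\approx \tfrac{8h^2}{3c}$, i.e.\ \emph{quadratic} in $h$, whereas a tangency witness lost by the straightening sits at depth up to $h$ beyond the chord's coverage and costs a spur of length $\Omega(h)$, i.e.\ \emph{linear} in $h$, to recapture. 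So for short, gently curved arcs the spur cost dominates the chord savings, and your polygonal replacement is strictly \emph{longer} than $\gamma$; localizing to an ever-shorter subarc, as you propose, only makes the ratio worse. Bending is locally cheap per unit of extra coverage gained---this is precisely why the theorem is not amenable to a local exchange argument of this form, and why your concluding "unprofitable" claim does not follow.

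The paper escapes this trap with a structural rather than metric argument. The tangency witness $p$ on the convex side must be covered from a \emph{second} tour point at distance exactly $r$ with no slack (otherwise either an uncovered sliver or a doubly covered region arises, the latter triggering \cref{lemma:straight-line-shorten-tours}); by \cref{lemma:orthogonal} that second point lies on the line through $p$ orthogonal to the offset curve, and after a perturbation to dodge segments and arc endpoints it must itself be an interior point of another curved arc squeezed against the first. Iterating \cref{lemma:two-closest-cutter-points} and \cref{lemma:orthogonal} produces a chain of successors along a common line that must eventually exit the polygon, at which point the outermost arc covers territory outside $\Pol$ that need not be covered, and \cref{lemma:straight-line-shorten-tours} finally yields the shortcut and the contradiction. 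To salvage your approach you would have to replace the local length comparison with some such global propagation argument; as written, the key step fails.
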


%

\subsection{Outline of the proof}

Because our proof of \cref{theorem:straight-line-tours-circles} is rather technical, we formulate various lemmata within its context, see~\cref{fig:straight-outline}.
By~means of these lemmata, the proof is partitioned as follows.
\begin{description}
	\item[\cref{lemma:straight-line-shorten-tours}.] The key lemma for the proof of \cref{theorem:straight-line-tours-circles} is that given an optimal tour, the coverage of a curved arc cannot contain a region that is covered more than once.
	\item[\cref{lemma:two-closest-cutter-points}.] Based on~\cref{lemma:straight-line-shorten-tours}, we show that each point $p$ from the boundary of an arc $t$ has two closest points $q,r$ from the optimal tour.
	\item[\cref{lemma:orthogonal}.] We show that the two closest points $q,r$ to $p$ lie in a line with $p$; 
	we call $r$ the \emph{successor} of~$q$.
	\item[\cref{theorem:straight-line-tours-circles}.] Alternately
applying~\cref{lemma:two-closest-cutter-points,,lemma:orthogonal} leads to a sequence of successors on a common
line, 
with a final successor $r'$ outside the polygon. This
implies that the final successor lies on an arc that covers a region
outside of the polygon, leading to a contradiction by \cref{lemma:straight-line-shorten-tours}.
\end{description}

\subsection{Preliminaries for the proof}
A~\emph{(curved) arc} is the image of a smooth function $t:[0,1] \rightarrow
\mathbb{R}^2$ that is either strongly convex or strongly concave,
see~\cref{fig:conic-curve-bounded}. A~\emph{segment} is the image of a linear
function $t:[0,1] \rightarrow \mathbb{R}^2$. A~\emph{component} $t$ is an arc
or a segment. We call $t(0)$ and $t(1)$ the \emph{start} and \emph{end point}
of~$t$. Any tour $T: [0,1] \rightarrow \mathbb{R}^2$ can be partitioned into a
sequence of components $t_0,\dots,t_{k-1}$, such that the end point of $t_i$
equals the start point of $t_{i+1\bmod k}$ for $i \in \{ 0,\dots,k-1 \}$.

\begin{figure}[ht]
	\centering
	\includegraphics[width=0.6\columnwidth]{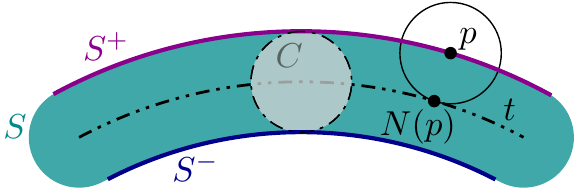}
	\caption{The coverage strip $S(t)$ (or $S$ in short) that results from traversing $t$ with the midpoint of the circular cutter $C$. $S^+$~and~$S^-$ denote the convex, and concave side, respectively. $p$ is a point in the interior of~$S^+$, and $N(p)$ is its closest point on $t$.}
	\label{fig:conic-curve-bounded}
\end{figure}

%

For any component $t$, the \emph{coverage strip} $S(t)$ is the region derived as the Minkowski sum $C \oplus t$. The~boundary $\partial S(t)$ can be partitioned into four curves as follows, see~\cref{fig:conic-curve-bounded}: For each point $p$ on the boundary, we consider its closest point $N(p)$ on $t$ regarding the Euclidean metric. 
The closure of all points $p$ on the boundary such that $N(p)$ is neither the
start nor the endpoint of $t$ is the union of two way-connected components. In
particular, when following the boundary of $S(t)$ in clockwise orientation,
the ``convex'' component (denoted by~$S^+(t)$) has positive curvature,
while the ``concave'' component (denoted by~$S^-(t)$) has
negative curvature; see~\cref{fig:conic-curve-bounded}.
If it is clear from the context, we denote by $S^+ := S^+(t)$, $S^- := S^-(t)$, and $S :=
S(t)$.

\subsection{Details of the proof}

A simple observation is that any tour can be partitioned into a set of line segments and curved arcs.
To show that the optimality of a tour excludes the existence of a curved arc, we consider
for the sake of contradiction
an optimal lawn mowing tour~$\T:=\{t_0,\dots, t_{k-1}\}$ for a circular cutter $C$ with at least one curved arc~$t := t_i$
.
Note that all our arguments are independent from the cutter's radius.

The key lemma for the proof 
is that ``shortcutting'' curved arcs by line segments while maintaining the coverage of the entire tour reduces its length.

\begin{lemma}\label{lemma:straight-line-shorten-tours}
Let $ab$ be a segment inside of the coverage strip $S$ of a curved arc $t$, with its end points $a,b$ on the convex side $S^+$ of $S$. Let $A$ be the region bounded by $ab$ and $S^+$. If~$A$ does not have to be covered by $t$ (because it is already
covered otherwise), then $T$ is not optimal.
\end{lemma}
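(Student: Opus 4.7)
To contradict the optimality of $T$, my plan is to construct a tour $\tilde{T}$ that still mows $\Pol$ but is strictly shorter. Only the sub-arc of $t$ responsible for~$A$ will be modified. I would set $\alpha := N(a)$ and $\beta := N(b)$; since $a, b \in S^+$, the vectors $a - \alpha$ and $b - \beta$ are outward unit normals to $t$ scaled by $r$. Hence $t_{\alpha\beta}$, the sub-arc from $\alpha$ to $\beta$, is precisely the portion of $t$ whose outward $r$-offset traces the $S^+$-arc from $a$ to $b$ that bounds $A$. I would parametrize $t_{\alpha\beta}$ by arc length $s \in [0, L]$ and, without loss of generality, assume that $t_{\alpha\beta}$ is strongly convex with curvature $\kappa(s) > 0$ throughout (the concave case is symmetric). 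Then I would fix a smooth bump function $f : [0, L] \to \mathbb{R}_{\geq 0}$ supported on some compact sub-interval $[s_1, s_2] \subset (0, L)$, strictly positive on $(s_1, s_2)$, and, for a small parameter $\delta > 0$, perturb inward by
\[
\tilde{t}_{\alpha\beta}(s) := t_{\alpha\beta}(s) + \delta\, f(s)\, \mathbf{n}(s),
\]
where $\mathbf{n}(s)$ is the unit normal to $t$ at $t(s)$ pointing toward the concave side. Because $f$ vanishes at both endpoints, $\alpha$ and $\beta$ are preserved, so splicing $\tilde{t}_{\alpha\beta}$ into $T$ in place of $t_{\alpha\beta}$ yields a valid closed tour $\tilde{T}$.

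\textbf{Length.} A standard first-variation computation for normal perturbations of a smooth planar curve would give
\[
\ell(\tilde{t}_{\alpha\beta}) = \ell(t_{\alpha\beta}) - \delta\!\int_0^L\! f(s)\,\kappa(s)\,ds + O(\delta^2).
\]
Since $\kappa > 0$ everywhere on $[0,L]$ and $f > 0$ on $(s_1, s_2)$, the integral is strictly positive, so $\ell(\tilde{T}) < \ell(T)$ for all sufficiently small $\delta$.

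\textbf{Coverage, and the main obstacle.} The crux of the argument — and the step I expect to be the main obstacle — is verifying $\Pol \subseteq S(\tilde{T})$. Since $\tilde{T}$ and $T$ agree away from $t_{\alpha\beta}$, it suffices to establish $S(t_{\alpha\beta}) \setminus A \subseteq S(\tilde{t}_{\alpha\beta})$; combined with the hypothesis $A \subseteq S(T\setminus\{t\})$, this yields $S(\tilde{T}) \supseteq \Pol$. The ``lost'' set $S(t_{\alpha\beta}) \setminus S(\tilde{t}_{\alpha\beta})$ is a thin strip of width at most $\delta\max f + O(\delta^2)$, hugging the portion of $S^+(t_{\alpha\beta})$ traced out as $s$ ranges over $[s_1, s_2]$; on the concave side coverage is in fact \emph{gained}. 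By the compact support of $f$ inside $(0, L)$, this portion of $S^+$ is a compact subset of the interior of the $S^+$-arc from $a$ to $b$, along which the thickness of $A$ is uniformly bounded below by some $\eta > 0$. Choosing $\delta < \eta/\max f$ then forces the lost strip entirely inside $A$, completing the contradiction. The subtle point is precisely the degeneration of $A$'s thickness at~$a$ and~$b$, which is why $f$ must be compactly supported in the \emph{open} interval; it is also why the tempting ``naive'' replacement of $t_{\alpha\beta}$ by the chord $\alpha\beta$ does not suffice, as chord replacement sacrifices coverage on the \emph{concave} side of $t$, where $A$ provides no compensation. The normal-offset perturbation keeps $\tilde{t}_{\alpha\beta}$ uniformly close to $t_{\alpha\beta}$ on \emph{both} sides while still shortening the arc via strict convexity.
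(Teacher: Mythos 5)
Your proof is correct, but it takes a genuinely different route from the paper's. The paper argues by a single macroscopic shortcut: it replaces the sub-arc of $t$ between $N(a)$ and $N(b)$ by the chord $N(a)N(b)$, notes that this strictly shortens the tour, and asserts (with reference to a figure) that the resulting arc--segment--arc sequence still covers $S$ except for $A$. You instead use an infinitesimal inward normal perturbation supported on a compact sub-interval, shorten via the first variation of arc length (using strong convexity, so $\int f\kappa>0$), and localize the coverage loss to an $O(\delta)$-strip hugging a compact interior portion of $S^+$, which lies inside $A$ once $\delta\max f$ is below the minimum thickness $\eta$ of $A$ there. Your version is more self-contained: the coverage bookkeeping reduces to $d(p,\tilde{t})\leq d(p,t)+\delta\max f$, plus the observations that nothing is lost where $f$ vanishes and that points on the concave side only get closer to the perturbed curve. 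By contrast, the paper's covering claim for the chord is stated without proof and genuinely depends on the hypothesis that $ab$ lies inside $S$, which limits how far $N(a)N(b)$ can stray from the arc. The trade-off is that the paper's construction produces an explicit polygonal replacement with a quantifiable saving, while yours only certifies non-optimality --- which is all the lemma needs. One caveat: your parenthetical dismissal of the chord replacement as insufficient (``sacrifices coverage on the concave side, where $A$ provides no compensation'') is too strong. Because $ab\subseteq S$, the chord $ab$ may dip past the arc $t$ toward $S^-$, so $A$ can in fact extend onto the concave side of $t$; the same hypothesis bounds the sagitta of $S^+$ over $ab$ by $2r$ and hence keeps $N(a)N(b)$ close to the arc. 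The paper's claim is more delicate than its one-sentence justification suggests, but your counterargument against it is not established; fortunately, your own perturbation argument does not rely on it.
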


\begin{proof}
Let $N(a), N(b) \in t$ be the closest points to $a, b$, respectively. Replacing the part of~$t$ between $N(a)$ and $N(b)$ by the segment $N(a)N(b)$ results 
in a shorter tour, see \cref{fig:cutter-strip-cropping-a}. This tour still covers the polygon, because the arc-segment-arc sequence $(t_1,N(a)N(b),t_2)$ covers $S$, except for the region $A$ that by assumption does not have to be covered by the sequence. Thus the tour containing the original arc~$t$ was not optimal.
  \begin{figure}[ht]
	\centering
	\begin{subfigure}[t]{0.49\linewidth}
		\centering
		\includegraphics[page=1,width=.9\linewidth]{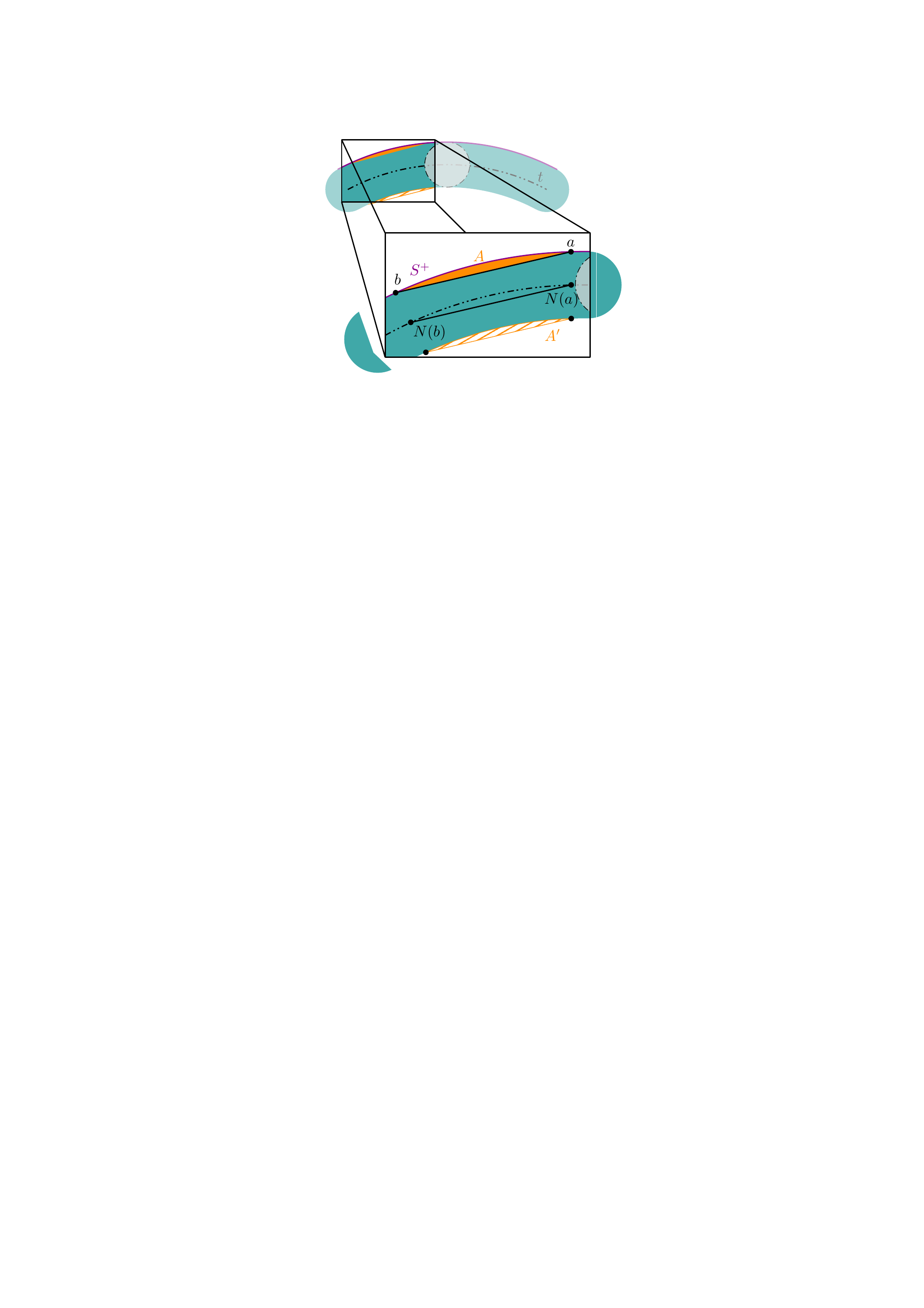}
		\caption{A covered region $A$.}
		\label{fig:cutter-strip-cropping-a}
	\end{subfigure}\hfill
	\begin{subfigure}[t]{0.49\linewidth}
		\centering
		\includegraphics[page=2,width=.9\linewidth]{figures/coverage-extra-region-new.pdf}
		\caption{Covering a point with the cutter.}
		\label{fig:cutter-strip-cropping-b}
	\end{subfigure}\hfill
	\caption{(a) A region $A$ that does not have to be covered by the coverage strip of a curved arc $t$, allowing a shortcut of $t$. (b) The intersection of the coverage of a point with the coverage strip of a curved arc implying the situation of \cref{fig:cutter-strip-cropping-a}.}
	\label{fig:cutter-strip-cropping}
\end{figure}
\end{proof}

The following corollary is straightforward.

\begin{corollary}\label{cor:intersection-cutter-strip}
	Let $r$ be a point on the tour $T$, and $t$ be an arc of $T$.
If the coverage of $r$ intersects the convex side $S^+$ of $t$,
then 
$T$ is not optimal, see~\cref{fig:cutter-strip-cropping-b}.
\end{corollary}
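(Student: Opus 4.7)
The plan is to reduce Corollary~\ref{cor:intersection-cutter-strip} to \cref{lemma:straight-line-shorten-tours} by extracting a segment $ab$ and a region $A$ from the intersection of the coverage disk $D_r := \{r\} \oplus C$ with $S^+$. Specifically, I would pick a connected sub-arc $\gamma$ of $S^+ \cap D_r$ and let $a, b$ denote its endpoints; in the generic case both lie on $\partial D_r$, so that $ab$ is a proper chord of the convex disk $D_r$.

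Next I would verify the two hypotheses of \cref{lemma:straight-line-shorten-tours}. First, the chord $ab$ lies inside the coverage strip $S$: by the discussion in \cref{sec:preliminaries}, $S^+$ has positive curvature and bulges outward from $S$, so the chord between any two of its points must lie on the interior side, i.e., inside $S$. Second, the region $A$ enclosed by $ab$ and $\gamma$ is entirely contained in $D_r$. This follows from the convexity of $D_r$: both $\gamma$ and $ab$ are contained in $D_r$, so the convex hull of $\gamma \cup ab$ is contained in $D_r$, and this convex hull contains $A$. Since $r \in T$, the disk $D_r$ is part of the overall coverage $T \oplus C$, providing coverage of $A$ independently of the arc $t$. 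Hence $A$ does not have to be covered by $t$, and \cref{lemma:straight-line-shorten-tours} yields that $T$ is not optimal.

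The main obstacle I anticipate is handling degenerate configurations: if $\partial D_r$ is tangent to $S^+$ (their intersection is a single point) there is no chord to work with, and if $r$ lies in the interior of $t$ itself, the coverage of $r$ is not truly independent of $t$'s coverage. Both degeneracies are non-generic and can be dismissed with a short argument: tangency is bypassed by a small perturbation of $r$ along $T$, and a point $r$ in the interior of $t$ only produces a single tangent intersection with $S^+$, so the hypothesis of the corollary already forces $r$ outside the interior of $t$, making $D_r$'s contribution to the coverage independent of~$t$.
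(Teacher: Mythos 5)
Your proposal is correct and matches the paper's intended argument: the paper gives no explicit proof (it calls the corollary ``straightforward'' and points to \cref{fig:cutter-strip-cropping-b}), but the figure makes clear that the intended reduction is exactly yours---take the chord $ab$ spanned by the portion of $S^+$ inside the coverage disk of $r$, observe that the enclosed region $A$ is already covered by that disk, and invoke \cref{lemma:straight-line-shorten-tours}. Your extra care with the tangential case (e.g.\ $r$ lying on $t$ itself, where the disk only touches $S^+$ in a point and no positive-area region $A$ arises) goes beyond what the paper records and is a legitimate caveat, since the corollary is only used in situations where the intersection is transversal.
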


Note that in the context of \cref{lemma:straight-line-shorten-tours} shortcutting~$t$ by inserting the segment $N(a)N(b)$ is allowed, i.e., still maintains a covering tour, because the region~$A$ does not have to be covered by $t$. Simultaneously, we obtain another region $A'$ that is now additionally covered by the coverage strip of $N(a)N(b)$, see \cref{fig:cutter-strip-cropping-a}. Intuitively speaking, $A'$ allows for another application of \cref{lemma:straight-line-shorten-tours} to another coverage strip, and so on.

For the remaining details, denote by $p$ an inner point of the polygon, which lies also in the \emph{interior} of the convex side $S^+$, i.e., a point $p$ whose closest point on $t$ is neither $t(0)$ nor~$t(1)$. We show that $p$ needs to have two closest points on $\T$.

\begin{lemma}\label{lemma:two-closest-cutter-points}
	$p$ has at least two closest points on $\T$.
\end{lemma}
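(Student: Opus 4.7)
The plan is to prove the contrapositive: assume for contradiction that $p$ has a unique closest point $q$ on $T$, and derive a contradiction by either invoking \cref{cor:intersection-cutter-strip} or by exhibiting an explicit uncovered polygon point.

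First I would observe that $d(p,q)\le r$, since $T$ is a lawn mowing tour and thus covers~$p$; I then split into subcases. If $d(p,q)<r$, then $q$ cannot lie on $t$, because every point of $t$ is at distance at least $d(p,N(p))=r$ from $p$; hence $q\in T\setminus t$. The open disk $B(q,r)$, which is the coverage of $q$, contains a neighborhood of $p$ and therefore meets the convex boundary $S^+$ in an arc around $p$. Then \cref{cor:intersection-cutter-strip}, applied to $q$, asserts that $T$ is not optimal, contradicting our standing hypothesis.

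The remaining subcase $d(p,q)=r$ is the main obstacle. Here uniqueness forces $q=N(p)\in t$, and every other point of $T$ is strictly farther than $r$ from $p$. By compactness of $T$ and continuity, there exist $\delta,\varepsilon>0$ such that every $s\in T$ lying outside an arc-length-$\delta$ neighborhood of $q$ satisfies $d(p,s)\ge r+\varepsilon$. On the short arc inside that neighborhood I would use that $q$ is the foot of the perpendicular from $p$ to the smooth arc $t$: the unit vector $v:=(p-q)/r$ is orthogonal to the tangent of $t$ at $q$, and a second-order Taylor expansion yields $\|p-t(s)\|^{2}=r^{2}+c s^{2}+O(s^{3})$ for some $c>0$ along the arc parameter $s$.

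Now I would construct an explicit uncovered point nearby. Set $p':=p+\eta v$ for small $\eta>0$. Expanding gives $\|p'-q\|=r+\eta$, and for $|s|\le\delta$ a direct computation yields $\|p'-t(s)\|^{2}=(r+\eta)^{2}+c s^{2}+O(s^{3})>r^{2}$; for $s$ outside the $\delta$-arc one has $d(p',s)\ge d(p,s)-\eta\ge r+\varepsilon-\eta>r$ whenever $\eta<\varepsilon$. Hence $p'$ is not covered by $T$, yet $p\in\mathrm{int}(\Pol)$ implies $p'\in\mathrm{int}(\Pol)$ for sufficiently small $\eta$, contradicting that $T$ is a lawn mowing tour. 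The hard part is exactly this boundary subcase: when $p$ sits on the boundary of the coverage region, no shortcut of $t$ is directly available because $p$ is covered only by the single disk $B(q,r)$ touching it; the orthogonality $v\perp\tau$ combined with strict quadratic growth of $\|p-t(s)\|^{2}$ at $q$ is what converts this tightness into an explicit uncovered point.
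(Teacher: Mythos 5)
Your proof is correct and follows essentially the same route as the paper's: a point of $\T$ strictly closer than $r$ to $p$ is excluded via \cref{cor:intersection-cutter-strip}, and a second covering point at distance exactly $r$ must exist because otherwise an interior point of $\Pol$ just beyond $p$ in the outward normal direction would be uncovered. Your second case merely makes rigorous, via compactness and the quadratic growth of the squared distance along $t$, what the paper asserts by pointing to \cref{fig:uncovered-portion}.
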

\begin{proof}
	As $p$ lies in the interior of a convex side, there is a point $N(p)$ on an arc $t$ and within a distance of $1$ to~$p$. By assumption, $p$ does not lie on the boundary of the polygon, so there is at least a second point $r$ from which the cutter covers~$p$, as otherwise there is an uncovered region, see~\cref{fig:uncovered-portion}. Assume one of these points has a smaller distance to $p$, see~\cref{fig:shorter-distance}. 
	\Cref{cor:intersection-cutter-strip} implies that $\T$ is not optimal, a contradiction.
\end{proof}

\begin{figure}[ht]
	\centering
	\begin{subfigure}{0.49\linewidth}
		\centering
		\includegraphics[page=3, trim={2.5cm 0cm 0cm 0cm},clip, width=.8\linewidth]{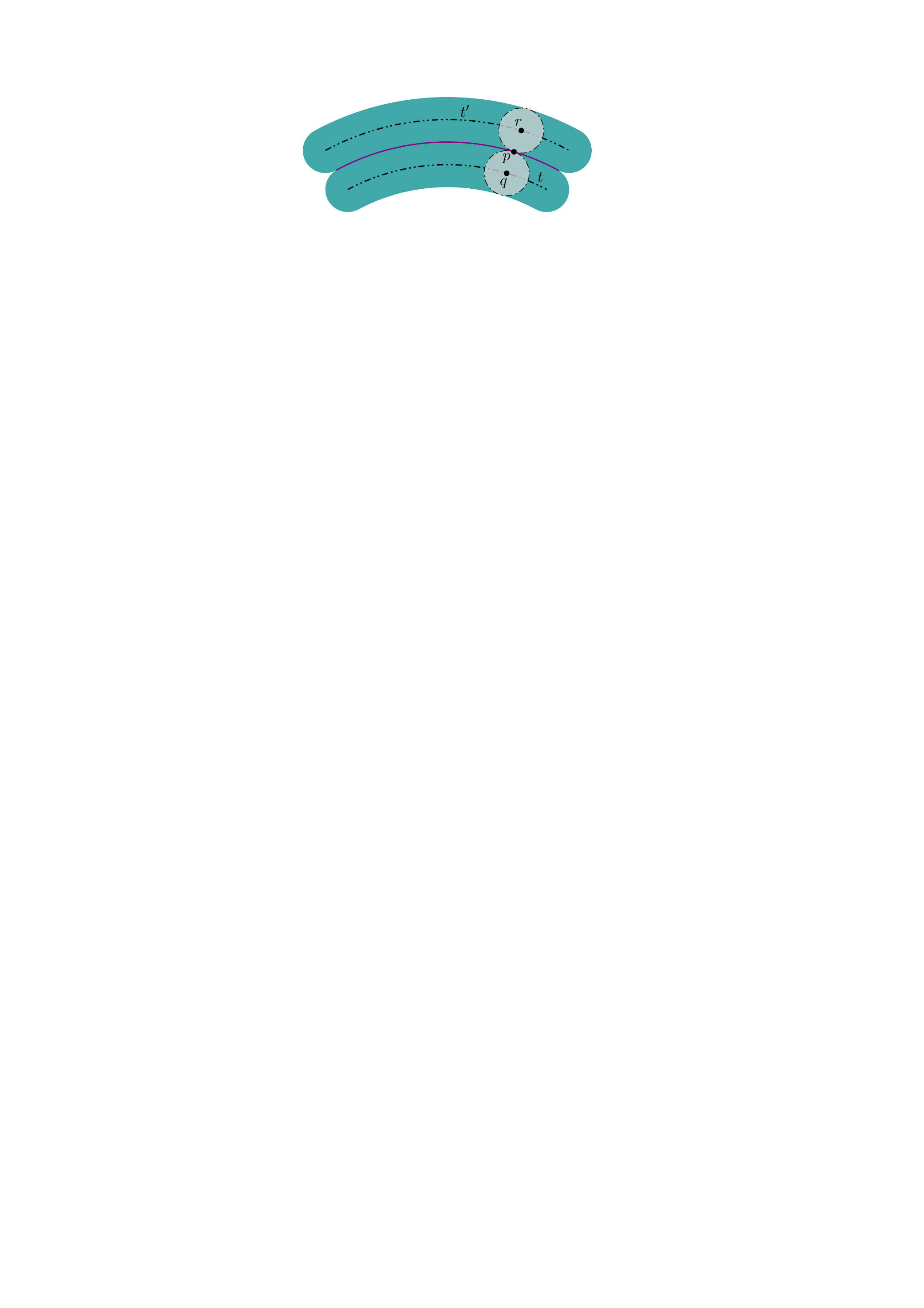}
		\caption{An uncovered portion.}
		\label{fig:uncovered-portion}
	\end{subfigure}\hfill
	\begin{subfigure}{0.49\linewidth}
		\centering
		\includegraphics[page=2, trim={2.5cm 0cm 0cm 0cm},clip,width=.8\linewidth]{figures/coverage-strip-cuddling.pdf}
		\caption{A multi-covered region.}
		\label{fig:shorter-distance}
	\end{subfigure}\hfill
	\caption{(a) An arbitrarily small area (in pink) around $p$ and above $S^+$ that is not covered by the tour. (b) If the point $r$ is closer than a distance of $1$ to $p$, then there is a region (in dark red) that is covered multiple times.}
	\label{fig:two-closest-cutter-points}
\end{figure}

In the context of \cref{lemma:two-closest-cutter-points}, we call $r$ the \emph{successor} of $N(p)$. As $t$ is smooth, $S^+$ is smooth as well. Hence, the tangent to $S^+$ in $p$ is well-defined. We say that a segment \emph{lies orthogonal} to $S^+$ in $p$ when it intersects the tangent to $S^+$ in $p$ orthogonally.

\begin{lemma}\label{lemma:orthogonal}
	The segment between $q$ and its successor~$r$ lies orthogonal to $S^+$ in $p$.
\end{lemma}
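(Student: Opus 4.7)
The plan is to prove the lemma by contradiction: if the segment $qr$ (with $q=N(p)$) fails to be orthogonal to the tangent of $S^+$ at $p$, then the coverage disk of $r$ must properly cross the curve $S^+$, yielding a two-dimensional multi-covered region that contradicts optimality via \cref{cor:intersection-cutter-strip}.

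First I would pin down the local geometry at $p$. Since $p$ lies on $S^+$, the distance $|p-q|$ equals the cutter radius, and \cref{lemma:two-closest-cutter-points} places the successor $r$ at the same distance, so $|p-q|=|p-r|$. Because $S^+$ is locally the envelope of the cutter disks centered on $t$, its tangent at $p$ coincides with the tangent of the circle $\partial(\{q\}\oplus C)$ there, and is therefore perpendicular to $pq$. Hence ``$qr$ orthogonal to $S^+$ at $p$'' is equivalent to $qr$ being collinear with $pq$; together with $|p-q|=|p-r|$ and $r\neq q$ (guaranteed by \cref{lemma:two-closest-cutter-points}), this is equivalent to $r=2p-q$, the point diametrically opposite $q$ on the cutter circle centered at $p$.

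Next, I would assume for contradiction that $r\neq 2p-q$. Then the vectors $q-p$ and $r-p$ are neither parallel nor anti-parallel, so the coverage disks $\{q\}\oplus C$ and $\{r\}\oplus C$ have distinct tangent lines at their common boundary point $p$. A standard half-plane approximation at $p$ shows that their boundary circles cross transversally there, and hence $\{r\}\oplus C$ contains a two-dimensional wedge arbitrarily close to $p$ that lies strictly inside the open disk $\{q\}\oplus C$. Since $q\in t$ we have $\{q\}\oplus C\subseteq S$, so this wedge lies strictly inside the coverage strip $S$; equivalently, $\{r\}\oplus C$ has portions on both sides of $S^+$ arbitrarily close to $p$. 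This is the scenario of \cref{cor:intersection-cutter-strip}: the coverage of the tour point $r$ properly meets the convex side $S^+$ of the arc $t$, forcing $T$ to be non-optimal and contradicting the standing assumption. Therefore $r=2p-q$, and the segment $qr$ is orthogonal to $S^+$ at $p$.

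The main obstacle will be to make precise what ``proper crossing'' means at $p$, since $\{r\}\oplus C$ and $S$ meet at the single boundary point $p$, whereas \cref{cor:intersection-cutter-strip} ultimately needs a two-dimensional multiply-covered region on which \cref{lemma:straight-line-shorten-tours} can be invoked. The key elementary fact is that two equal-radius circles sharing a point are tangent there iff their centers and the shared point are collinear; any deviation from collinearity produces a transversal crossing and, consequently, a positive-area overlap of the two open disks in every neighborhood of $p$. This double-coverage near $p$ is precisely the configuration in which \cref{lemma:straight-line-shorten-tours} allows us to shortcut the arc $t$, producing the contradiction.
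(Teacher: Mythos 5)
Your proposal is correct and follows essentially the same route as the paper: assume $qr$ is not orthogonal to $S^+$ at $p$, deduce that the cutter centered at $r$ must then cross $S^+$ and cover a positive-area region inside the coverage strip near $p$, and invoke \cref{cor:intersection-cutter-strip} to contradict optimality. You merely make explicit two steps the paper leaves implicit (that orthogonality is equivalent to collinearity of $p$, $q$, $r$ because the tangent to $S^+$ at $p$ is perpendicular to $pq$, and that two equal-radius circles through a common point are tangent there iff their centers are collinear with it), which is a useful sharpening but not a different argument.
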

\begin{proof}
	Cutter boundaries centered at $q$ and $r$ share at least the point $p$. Assume that $qr$ does not intersect $S^+$ orthogonally in $p$. This implies that at least one of the cutters contains another point from the interior of $S^+$, because $p$ is an inner point of~$S^+$, see \cref{fig:orthogonal-a}. \Cref{cor:intersection-cutter-strip} implies that $\T$ is not optimal, a contradiction.
\end{proof}

\begin{figure}[ht]
	\centering
\begin{subfigure}[t]{0.49\linewidth}
	\centering
	\includegraphics[page=2, trim={2cm 0cm 0cm 0cm},clip,width=.8\linewidth]{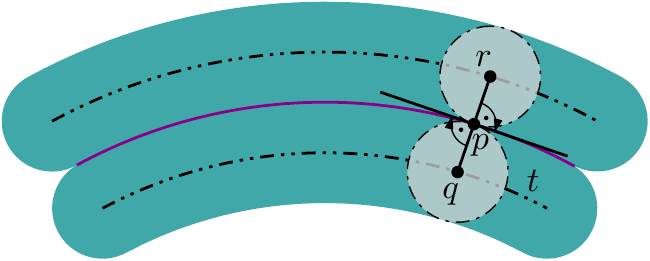}
	\caption{Non-orthogonal intersection.}
	\label{fig:orthogonal-a}
\end{subfigure}\hfill
\begin{subfigure}[t]{0.49\linewidth}
	\centering
	\includegraphics[page=1, trim={2cm 0cm 0cm 0cm},clip, width=.8\linewidth]{figures/coverage-strip-cuddling-orthogonal.pdf}
	\caption{Orthogonal intersection.}
	\label{fig:orthogonal-b}
\end{subfigure}\hfill
	\caption{(a) The situation when $qr$ does not intersect~$S^+$ orthogonally in $p$, i.e., a point from the interior of~$S^+$ lies in the interior of a cutter. (b) The segment~$qr$ intersects $S^+$ orthogonally in $p$.}
	\label{fig:orthogonal}
\end{figure}

Combining \cref{lemma:two-closest-cutter-points,lemma:orthogonal} yield the following.

\begin{corollary}
	$p$ has exactly two closest points on $\T$.
\end{corollary}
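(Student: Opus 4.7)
The plan is to combine \cref{lemma:two-closest-cutter-points,lemma:orthogonal} to rule out the existence of a third closest point on $\T$. First I would record what is already in hand: \cref{lemma:two-closest-cutter-points} supplies two closest points $q := N(p)$ and a successor $r$, and its proof (via \cref{cor:intersection-cutter-strip}) forces every closest point to lie at distance exactly the cutter radius from $p$, since any strictly smaller distance would make the corresponding cutter intersect the interior of $S^+$. Applying \cref{lemma:orthogonal} to the pair $(q, r)$ then places them on the normal line $\ell_p$ to $S^+$ at $p$; since only two points on $\ell_p$ sit at the cutter radius from $p$, one on each side, we already know that $\{q, r\}$ exhausts the set of closest points that can lie on $\ell_p$.

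For the contradiction step, I would suppose that there is a third closest point $s \in \T \setminus \{q, r\}$. By the argument above, $s$ is also at the cutter radius from $p$. The key observation is that the proof of \cref{lemma:orthogonal} only uses that both the cutter around $q$ and the cutter around the second closest point have $p$ on their boundary while $p$ lies in the interior of $S^+$; it therefore applies identically to the pair $(q, s)$, forcing $s$ onto $\ell_p$ at the cutter radius from $p$, and hence $s \in \{q, r\}$, a contradiction.

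The main obstacle is a bookkeeping one: making sure that \cref{lemma:orthogonal} is really available for every second closest point and not only for a single distinguished successor. A cleaner alternative avoids reapplying the lemma and invokes \cref{cor:intersection-cutter-strip} directly: if $s$ sits at the cutter radius from $p$ but off $\ell_p$, then the tangent to the cutter boundary at $p$ does not coincide with the tangent to $S^+$ at $p$, so the cutter around $s$ crosses $S^+$ transversally at $p$ and therefore contains points of the interior of $S^+$, which contradicts the corollary.
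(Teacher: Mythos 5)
Your proposal is correct and follows essentially the same route as the paper, which states the corollary with no explicit proof beyond the remark that it follows by combining \cref{lemma:two-closest-cutter-points,lemma:orthogonal}; you supply exactly the intended details (every closest point must be at distance exactly the cutter radius by \cref{cor:intersection-cutter-strip}, and orthogonality pins all of them to the normal line at $p$, which carries only two such points). The closing alternative via a transversal crossing of $S^+$ is just the argument of \cref{lemma:orthogonal} restated, so both variants are faithful to the paper.
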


Intuitively, the convex side $S^+$ of $t$ is squeezed between two cutters centered in $q$ and $r$. Applying \cref{lemma:straight-line-shorten-tours,,lemma:two-closest-cutter-points,,lemma:orthogonal} yields a proof of \cref{theorem:straight-line-tours-circles}, which we restate here.

\setcounter{@theorem}{0}
\straightlinetours*

\begin{proof}
Applying \cref{lemma:two-closest-cutter-points,,lemma:orthogonal} yields the existence of the successor~$r$~of~$q$, such that $p$, $q$, and $r$ lie in a line that intersects 
$S^+$ of $t$ orthogonally in~$p$.

To ensure that $r$ does not lie on a segment, we slightly perturb $p$ in the interior of $S^+$, such that $r$ lies on an arc. In particular, let $r_1$ and $r_2$ be two successors resulting from two different points $p_1$ and $p_2$ from the interior of $S^+$, such that $r_1$ and $r_2$ lie on segments $s_1$ and $s_2$, respectively. $s_1$ lies orthogonal to $r_1p_1$; otherwise \cref{lemma:straight-line-shorten-tours} implies that the entire tour is not optimal, see \cref{fig:orthogonal-successor-segments}. Analogously, we conclude that $s_2$ lies orthogonal to~$r_2p_2$. Hence, $s_1$ and $s_2$ are different segments, see~\cref{fig:two-different-segments}. As the set of segments is a discrete set, there is a point $p$ in the interior of $S^+$, such that the resulting successor does not lie on a segment, as $S^+$ is continuous.
An analogous argument implies that there is a point $p$ in the interior of $S^+$, such that the successor $r$ is neither a point on a segment nor a start or endpoint of an arc, i.e., $r$ is an inner point of an arc.

\begin{figure}[ht]
	\centering
	\begin{subfigure}{0.49\linewidth}
		\centering
		\includegraphics[page=2,width=.9\linewidth]{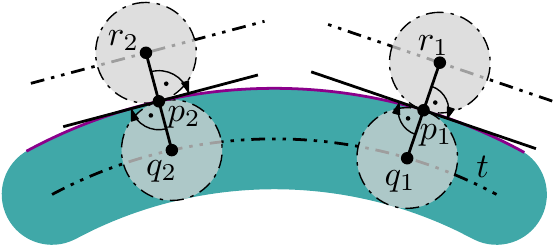}
		\caption{Non-parallel segments.}
		\label{fig:orthogonal-successor-segments}
	\end{subfigure}\hfill
	\begin{subfigure}{0.49\linewidth}
		\centering
		\includegraphics[page=1,width=.9\linewidth]{figures/coverage-strip-successor-segment.pdf}
		\caption{Locally parallel segments.}
		\label{fig:two-different-segments}
	\end{subfigure}\hfill
	\caption{(a) The point $r_1$ lies on a segment that is not parallel to the tangent in $p_1$. (b) The points $r_1$ and $r_2$ lie on two distinct segments that are parallel to the tangents in $p_1$ and $p_2$, respectively.}
	\label{fig:pertubation-of-line}
\end{figure}

This argument of the existence of a successor from the interior of an arc is repeated, resulting in a sequence $R = \{ r, \dots,r'\}$ of successors lying on a line. At some point in this process, a successor $r'$ is reached, fulfilling exactly one of the following cases: (1)~Either the segment $qr'$ crosses the boundary of the polygon (see~\cref{fig:successor-out-of-polygon}), or (2)~the convex side of the arc of~$r'$ touches the polygon boundary in a point~$s$ (see~\cref{fig:successor-hits-polygon}). This means that the line $qr'$ lies orthogonal to the boundary of $\Pol$. In this situation, we denote the intersection point of $qr'$ with the coverage of $r'$ by $s$, which is equal to the intersection between $qr'$ and $\partial \Pol$.

\begin{figure}[ht]
	\centering
	\begin{subfigure}{0.49\linewidth}
		\centering
		\includegraphics[page=1, trim={2.5cm 0cm 0cm 0.33cm},clip,width=.9\linewidth]{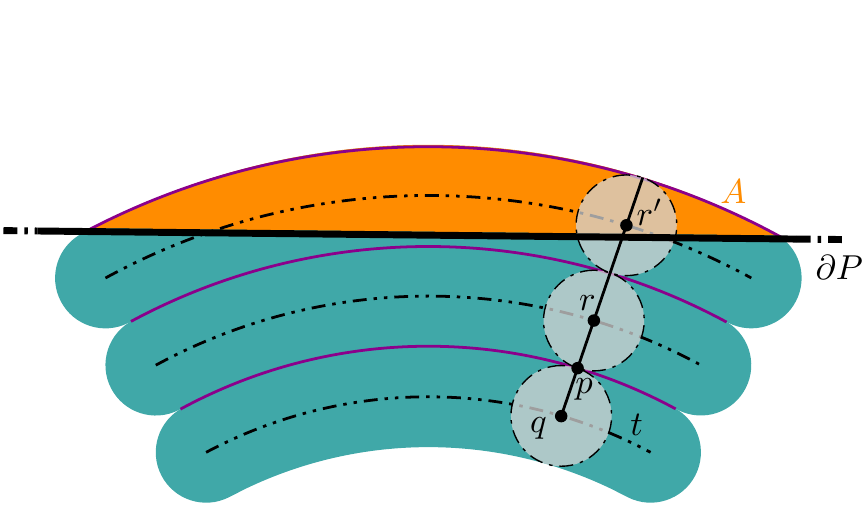}
		\caption{A successor sequence.}
		\label{fig:successor-out-of-polygon}
	\end{subfigure}\hfill
	\begin{subfigure}{0.49\linewidth}
		\centering
		\includegraphics[page=2, trim={2.5cm 0cm 0cm 0.33cm},clip,width=.9\linewidth]{figures/coverage-strip-cuddling-iterate.pdf}
		\caption{Perturbing the sequence.}
		\label{fig:successor-hits-polygon}
	\end{subfigure}\hfill
	\caption{(a) The sequence of constructed successors $R := \{r, \dots,r' \}$ lying in a common ray intersecting the boundary of the polygon $\Pol$ inside the coverage of~$r'$. (b)~Perturbing the successors for avoiding a final successor whose coverage lies tangentially to $\partial \Pol$.}
	\label{fig:successor-in-line}
\end{figure}

In the first case, we obtain a region $A$ that must not be covered by the tour, hence \cref{lemma:straight-line-shorten-tours} implies that $\T$ cannot be optimal.

In the second case, we perturb $p$ including the entire constructed sequence $R$ of successors and the intersection point $s$ of the ray $pr'$ with $\partial \Pol$. This results in the scenario that $s$ is an inner point of $\Pol$, leading to a continuation of the construction of successors until the first case is reached.
This concludes the proof.
\end{proof}

\section{Irrational tour vertices}
\label{sec:irrational-tour-vertices}

In the previous section we saw that optimal lawn mowing tours must be polygonal paths. 
This makes it critical to characterize coordinates of potential tour vertices.
As we show in the following, irrational coordinates may be involved in an optimal tour, indicating possible algebraic (and hence algorithmic) difficulties.



\begin{restatable}{theorem}{irrationalcoordinates}\label{lemma:triangle-irrational-optimal-tour}
For any rational number $0 < \lambda\leq 2$, there are simple polygons $\Pol$ with rational vertex coordinates 
for which an optimal lawn mowing tour contains a vertex with a coordinate of $h := \sqrt{1-{\left(\lambda/2\right)}^2}$.
\end{restatable}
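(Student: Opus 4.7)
The plan is to construct, for each rational $\lambda\in(0,2]$, a simple polygon $\Pol$ with rational vertex coordinates whose every optimal tour contains a vertex whose $y$-coordinate equals $h=\sqrt{1-(\lambda/2)^2}$. The driving observation is geometric: two rational points $a=(-\lambda/2,0)$ and $b=(\lambda/2,0)$ lie within distance~$1$ of a common cutter center $v$ if and only if $v\in D(a,1)\cap D(b,1)$, and the two points of this lens with extremal $y$-coordinate are exactly $(0,\pm h)$. Thus, any cutter placement that simultaneously covers $a$ and $b$ is pinned to an explicit algebraic locus whose boundary crosses the line of symmetry at heights $\pm h$.

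My polygon $\Pol$ will consist of a long, thin horizontal strip $B$ lying above the line $y=H$ for a large rational constant~$H$, together with two very thin spikes hanging below $B$ and terminating in apex vertices at $a$ and $b$. The strip's height is chosen well below~$2$, so that the cheapest covering of $B$ alone is essentially a unique \emph{racetrack} sub-tour $T_0\subset B$, up to negligible perturbation; the spikes are chosen thin enough that the only covering requirement they impose, beyond what $T_0$ already provides, is that the cutter come within distance~$1$ of $a$ and of $b$. By \cref{theorem:straight-line-tours-circles}, it suffices to consider polygonal tours.

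Any such tour must extend $T_0$ downward to cover both apexes, and there are two natural options. A \emph{single-vertex extension} to a point $v$ covering both $a$ and $b$ requires $v\in D(a,1)\cap D(b,1)$, and its insertion cost is at least $2(H-h)$, minimized \emph{uniquely} at $v=(0,h)$ by elementary convexity of the lens. A \emph{two-vertex extension}, one detour per apex, has total length at least $4(H-1)-O(\lambda)$. For $H$ sufficiently large depending only on $\lambda$, the single-vertex option is strictly cheaper, so any optimal tour must include the vertex $(0,h)$, whose $y$-coordinate is exactly~$h$.

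The main obstacle is ruling out exotic tour topologies that do not fit the ``$T_0$ plus detour'' template. I~would handle this by making $B$ narrow enough that every covering sub-tour of $B$ is within $o(1)$ of the racetrack $T_0$, and by using the elementary fact that every cutter placement covering $a$ or $b$ has $y$-coordinate in $[-1,1]$ and therefore cannot lie inside $B\subset\{y\geq H\}$. These two ingredients force every optimal tour to decompose as a $T_0$-like sweep plus a downward extension to the apexes, reducing the analysis to the two extension cases above. A~routine parameter choice (spike width and strip height both much smaller than a suitable power of $H^{-1}$, with $H$ a fixed large rational constant) then yields an explicit simple polygon with rational vertices witnessing the theorem.
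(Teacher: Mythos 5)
Your core idea is the same one the paper uses: the two rational points $a,b$ at mutual distance $\lambda$ pin any cutter position covering both of them to the lens $D(a,1)\cap D(b,1)$, whose apex $(0,h)$ has the irrational coordinate $h=\sqrt{1-(\lambda/2)^2}$. However, your construction does not force an optimal tour to visit that apex, and the gap is fatal rather than cosmetic. The spikes are themselves part of $\Pol$ and must be mowed along their entire length; since $T_0$ lies in $B\subset\{y\ge H\}$, it covers each spike only down to $y=H-1$, so your claim that the spikes ``impose no covering requirement beyond reaching $a$ and $b$'' is false. Covering the spikes forces the tour to descend along both $x=-\lambda/2$ and $x=\lambda/2$ essentially to $y=1$, and from $(-\lambda/2,1)\in\partial D(a,1)$ and $(\lambda/2,1)\in\partial D(b,1)$ both apexes are already covered. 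A tour that runs the racetrack, descends one spike to height $1$, crosses to the other spike along the segment from $(\lambda/2,1)$ to $(-\lambda/2,1)$ (which passes \emph{above} the lens, since $h<1$), and ascends back to $B$ covers all of $\Pol$ and is strictly shorter than any tour that additionally dips to $(0,h)$. So your polygon is a counterexample to the property you want it to have.

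A second, independent problem is the insertion-cost calculus. For a long strip, inserting a point $v$ into a tour of length $\approx L$ does not cost $2\,\mathrm{dist}(v,T_0)$; replacing an edge $xy$ by $xv,vy$ costs $|xv|+|vy|-|xy|$, which for $v$ at depth $D$ below a long horizontal edge is only about $2D^2/L$. Worse, when the horizontal travel is shared on a single return leg, visiting the two disk tops $(\pm\lambda/2,1)$ costs asymptotically $L+2(H-1)^2/L$ while visiting the lens apex costs $L+2(H-h)^2/L$, and since $h<1$ the two-point option wins --- the opposite of what you claim. Your bounds ``$\ge 2(H-h)$'' and ``$\ge 4(H-1)-O(\lambda)$'' are both incorrect for a long $B$, and the vague reduction ``every covering sub-tour of $B$ is within $o(1)$ of the racetrack'' is not established. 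The paper avoids all of this with a far simpler polygon: the single triangle with vertices $(0,0)$, $(\lambda,0)$, $(\lambda/2,2)$, which is entirely covered by the doubled vertical segment from $(\lambda/2,h)$ to $(\lambda/2,1)$; the tour must meet the three unit disks around the vertices, and a short triangle-inequality argument ($d(p_\ell,q)+d(p_r,q)>2\,d(p,q)$) shows the lens apex $p$ must lie on any optimal tour. If you want to salvage your approach, you essentially have to delete the spikes and make $a$, $b$, and the high anchor all vertices of one fat polygon --- which collapses to the paper's construction.
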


\begin{figure}[ht]
	\centering
	\begin{subfigure}{.49\linewidth}
		\centering
		\includegraphics[page=1, scale=0.75]{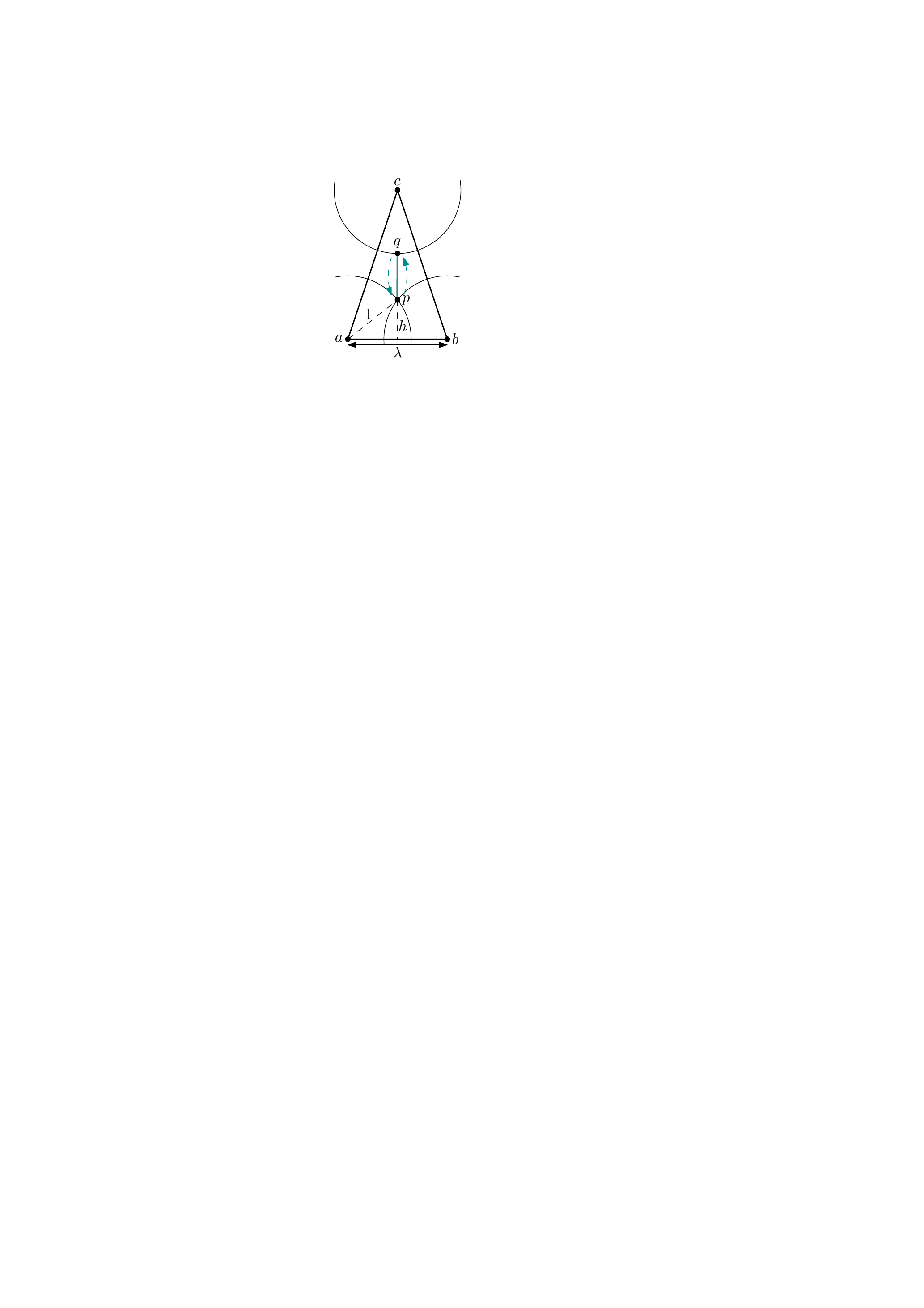}
		\caption{An optimal tour.}
		\label{fig:irrational-example-opt}
	\end{subfigure}\hfill
	\begin{subfigure}{.49\linewidth}
		\centering
		\includegraphics[page=2, scale=0.75]{figures/irrational_solutions_general.pdf}
		\caption{A suboptimal tour.}
		\label{fig:irrational-example-not-opt}
	\end{subfigure}\hfill
  \caption{(a) A polygon $\Pol = (a,b,c)$ with the optimal lawn mowing tour, connecting $p$ and $q$. (b)~A possible lawn mowing tour without using an irrational point.}
	\label{fig:irrational-example}
\end{figure}

\begin{proof}
	For $0 < \lambda \leq 2$, let $\Pol = (a,b,c)$ be the polygon with $a=(0,0)$, $b=(\lambda,0)$, and $c=\left(\frac{\lambda}{2}, 2\right)$, see~\cref{fig:irrational-example}. It is easy to see that any lawn mowing tour must have at least one point that touches the unit circles $C_a, C_b$, and $C_c$ that are centered at the vertices of $\Pol$.  
	
	We show that the point $p$ on the intersection of $C_a$ and $C_b$ within $\Pol$ has to be part of the optimal lawn mowing tour, see \cref{fig:irrational-example-opt}. 
	
	Assume that $\T$ is an optimal tour, and let $q$ be some point on $\T$ with $q \in \partial C_c$. For the sake of contradiction, assume that $p \not \in \T$.
	Thus, there exist two points $p_\ell, p_r \in \T$, with $p_\ell \in \partial C_a$, and $p_r \in \partial C_b$, see~\cref{fig:irrational-example-not-opt}.
	As $\T$ connects $p_\ell, p_r$, and $q$, the length of the tour is
	$\ell(\T) \geq d(p_\ell, p_r) + d(p_\ell, q) + d(p_r, q)$.
	Independent of the choice of the points $p_\ell, p_r$, and~$q$, we observe that
	$d(p_\ell, q) + d(p_r, q) > 2 d(p, q)$.
	Moreover, $\T'=pqp$ is a feasible tour with length $\ell(\T') < \ell(\T)$, contradicting the assumption that $\T$ is an optimal tour.
	
	It remains to show that $p$ has an irrational coordinate, and that $\T'$ is the unique optimal tour.
  The three points $a,b$, and $p$ describe an isosceles triangle of base length $\lambda$, thus the $y$-coordinate of $p$ is $\sqrt{1-{\left(\lambda/2\right)}^2}$. The point $q$ that is closest to $p$ clearly lies on the vertical line $x= \frac{\lambda}{2}$. It is easy to see that the tour that connects the points $p = \left(\frac{\lambda}{2}, \sqrt{1-{\left(\lambda/2\right)}^2}\right)$ and $q = \left(\frac{\lambda}{2}, 1\right)$ covers $\Pol$ and has length $\ell(\T') = 2\left(1-\sqrt{1-{\left(\lambda/2\right)}^2}\right)$.
	Because $p$ has to be a vertex of the tour, and any other choice of $q$ results in a longer tour, $\T'$ is the unique optimal lawn mowing tour for $\Pol$.
\end{proof}
  
As a consequence, bounded accuracy in computing coordinates may become an issue, motivating the concept of
$\varepsilon$-\emph{robust} coverage, for which any point must be covered by a portion of the cutter $C$ that is
at least $\varepsilon$-interior to $C$, corresponding to a CETSP solution with neighborhood size $r-\varepsilon$.
Furthermore, the algebraic difficulty motivates the following.

\begin{problem}
\label{prob1}
Is the LMP $\exists\R$-complete?
\end{problem}

\section{A primal-dual approach to Lawn Mowing}
\label{sec:primaldual}

Now we describe our primal-dual approach to computing good tours for the
LMP, based on (i)~iteratively selecting
finite subsets $W$ of points from the uncovered portion of $P$,
(ii)~covering $W$ by an optimal solution $T_W$ for the
Close-Enough TSP for $W$ to obtain a lower bound, (iii)~enhancing $T_W$ to a feasible covering tour of $P$ to obtain an upper bound.

In the following, we describe basic mechanisms and analytical implications;
further technical details that affect the practical performance
are described in the following \cref{sec:exp}.


\subsection{Witness sets and Close-Enough TSP}

At each stage of our algorithm, we use a finite set $W$ of \emph{witnesses},
in $P$. It is straightforward to see that these induce lower bounds on the length of
an optimal lawn mowing tour.

\begin{theorem}
\label{th:LB}
For $W_1\subset W_2\subset P$, let $T_{W_1}$, $T_{W_2}$, and $T_P$ be optimal lawn mowing tours
for $W_1$, $W_2$, and $P$, respectively. Then $\ell(T_{W_1})\leq\ell(T_{W_2})\leq\ell(T_P)$.
\end{theorem}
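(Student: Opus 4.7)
The plan is to prove the two inequalities by a standard feasibility argument: any tour that covers the larger set also covers the smaller one, so an optimal solution for the larger instance serves as a feasible (but not necessarily optimal) solution for the smaller instance, which directly yields the desired monotonicity.

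More concretely, I would first note that the notion of a lawn mowing tour from \cref{sec:preliminaries} extends to an arbitrary subset $X \subset \R^2$ in the natural way: a tour $T$ covers $X$ if $X \subset T \oplus C$. With this definition, if $X \subset Y \subset \R^2$ and $T$ is a lawn mowing tour for $Y$, then $X \subset Y \subset T \oplus C$, so $T$ is also a lawn mowing tour for $X$. In particular, since $W_1 \subset W_2 \subset P$, any lawn mowing tour for $W_2$ is a lawn mowing tour for $W_1$, and any lawn mowing tour for $P$ is a lawn mowing tour for both $W_2$ and $W_1$.

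To conclude the first inequality, observe that $T_{W_2}$ is a feasible lawn mowing tour for $W_1$ by the inclusion above. Since $T_{W_1}$ is optimal for $W_1$, we have $\ell(T_{W_1}) \leq \ell(T_{W_2})$. For the second inequality, $T_P$ is a feasible lawn mowing tour for $W_2$, and since $T_{W_2}$ is optimal for $W_2$, we obtain $\ell(T_{W_2}) \leq \ell(T_P)$. Chaining these gives the statement of the theorem.

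There is essentially no obstacle here: the only subtlety is making explicit that the lawn mowing definition applies to discrete witness sets as much as to polygonal regions, so that the comparison of optimal values over $W_1$, $W_2$, and $P$ is well-defined. Once that is in place, the result is a one-line feasibility/optimality argument, and no geometric properties of $P$ or of the CETSP are needed.
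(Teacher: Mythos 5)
Your argument is correct and matches what the paper intends: the theorem is stated without an explicit proof (the text merely calls it ``straightforward''), and the intended justification is exactly your feasibility/monotonicity observation that a tour covering a superset covers every subset. Nothing is missing.
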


Because all points $w\in W$ are covered by a cutter $C$ of radius $r$ if and only if
the center of $C$ visits all disks of radius $r$ centered at $w$, such a
\emph{Close-Enough TSP} tour for $W$ corresponds to a lawn mowing tour
for $W$, and thus a lower bound for the LMP on $P$.
Therefore, a sequence of enhanced witness sets leads to an increasing sequence of
lower bounds for $\ell(T_P)$; if some $T(W_i)$ is a feasible LMP tour for $P$, it
is optimal.

%
%

Arkin et al.~\cite{arkin1993lawnmower, Arkin2000} noted that any LMP tour may need a pseudopolynomial
(but not polynomially bounded) number of tour vertices even for a
square-shaped lawn with diameter $d$ and $d>>r$. 
For $\varepsilon$-\emph{robust} coverage (as defined in \cref{sec:irrational-tour-vertices}),
we can establish a finite upper bound on the number of necessary witnesses that is independent of $n$, and
holds even for general sets $P\subset \R^2$, not necessarily just polygons or connected sets.

%
\begin{theorem}\label{theorem:witness-size-less-r}
Consider a region $\Pol \subset \R^2$ of diameter $d$, a circular cutter $C$ of radius $r$ and some $0<\varepsilon\leq \frac{r}{2}$.
Then there is a witness set $W$ of size bounded by $|W| \leq \ceil{\frac{2d^2}{\varepsilon^2}} \in O(d^2)$,
for which any $\varepsilon$-\emph{robust} CETSP tour is also feasible for the LMP\@.
\end{theorem}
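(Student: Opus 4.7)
The plan is to construct $W$ as an $\varepsilon$-net of $\Pol$ and to turn $\varepsilon$-robust CETSP feasibility into LMP feasibility via a one-line triangle-inequality argument. Concretely, I would overlay a regular axis-aligned grid on $\R^2$ with cell side length $s := \varepsilon/\sqrt{2}$, so that every grid cell has diameter exactly $\varepsilon$. For every cell $c$ with $c \cap \Pol \neq \emptyset$, pick an arbitrary representative point $w_c \in c \cap \Pol$ and collect these representatives into~$W$. By construction, every $p \in \Pol$ lies in some cell whose representative $w$ satisfies $d(p,w) \leq \varepsilon$.

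The feasibility argument is then immediate. Let $T$ be any $\varepsilon$-robust CETSP tour for $W$, so that for every $w \in W$ there is a point $t_w \in T$ with $d(t_w, w) \leq r - \varepsilon$. For arbitrary $p \in \Pol$ with representative witness $w$ in its cell, the triangle inequality gives $d(p, t_w) \leq d(p, w) + d(w, t_w) \leq \varepsilon + (r-\varepsilon) = r$, so $p \in T \oplus C$. Hence $T$ is an LMP-feasible tour for $\Pol$. Note that this argument never uses that $\Pol$ is a polygon or even connected, so it goes through for arbitrary $\Pol \subset \R^2$ of diameter $d$, as claimed.

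For the cardinality bound, since $\Pol$ has diameter $d$ it fits into an axis-aligned square $B$ of side $d$; only cells meeting $B$ can contribute witnesses. Aligning the grid so that one corner of $B$ lies on a grid vertex, the number of contributing cells is at most $\ceil{d/s}^2 = \ceil{d\sqrt{2}/\varepsilon}^2$, which is $O(d^2/\varepsilon^2)$. To reach the specific form $\ceil{2d^2/\varepsilon^2}$, I would instead argue by packing: the closed disks of radius $\varepsilon/2$ centered at the chosen witnesses are pairwise interior-disjoint (any two witnesses in different cells are at distance at least the cell-side $s > \varepsilon/2 \cdot \ldots$, and in the same cell only one is picked), and all lie inside the Minkowski expansion $B \oplus C_{\varepsilon/2}$ whose area is bounded; dividing total area by $\pi\varepsilon^2/4$ and using $\varepsilon \leq r/2$ (so that $\varepsilon$ is small relative to $d$ for nontrivial instances) yields the claimed $\ceil{2d^2/\varepsilon^2}$ estimate.

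The conceptual core — the $\varepsilon$-net plus the triangle inequality — is short and clean; the main obstacle I anticipate is purely bookkeeping, namely matching the stated constant $2$ exactly rather than merely $O(d^2/\varepsilon^2)$. Depending on which variant ends up cleanest, the proof may either commit to the grid-count approach and absorb the squared ceiling by a direct inequality, or pass through the packing-by-disks viewpoint where the constant falls out of an area computation; the $\varepsilon \leq r/2$ hypothesis in the theorem statement is precisely what allows the slack needed to swallow lower-order terms in this step.
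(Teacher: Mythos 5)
Your proposal is correct and follows essentially the same route as the paper: subdivide a $d\times d$ bounding box into an orthogonal mesh of width $\varepsilon/\sqrt{2}$ so each cell has diameter at most $\varepsilon$, pick one witness per cell meeting $\Pol$, and use the triangle inequality with the $r-\varepsilon$ neighborhood radius to cover every point of each occupied cell. The only difference is that you worry (rightly) about whether per-dimension ceilings match the stated constant $\ceil{2d^2/\varepsilon^2}$ exactly; the paper simply asserts the cell count without that bookkeeping, so your extra care is, if anything, more rigorous.
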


\begin{proof}
A bounding box $B$ of $P$ has dimensions at most $d\times d$, so we can subdivide $B$ by an orthogonal mesh of
width $\varepsilon/\sqrt{2}$, consisting of not more than $\ceil{\frac{2d^2}{\varepsilon^2}}$ grid cells, $C_{ij}$,
each of diameter at most $\varepsilon$. Whenever $P\cap C_{ij}\neq\emptyset$, picking an arbitrary witness
point from $P\cap C_{ij}$ ensures that each such cell (and hence, all of $P$) is covered by
any $\epsilon$-robust CESTP tour.
\end{proof}


For general (and not just robust) witness sets, it is easy to see that
all extreme points of $P$ must be contained; if $P$ is a convex polygon,
this includes all $n$ vertices.  Thus, no such upper bound exists in
terms of only $d$ and $r$. For other types of regions $P$ (such as disks), no finite
bound of any kind exists, as the whole perimeter needs to be contained in a witness set.

\begin{problem}
\label{prob2}
For a polygon $P$ with $n$ vertices and diameter $d$, and a circular cutter of radius $r$,
is there a witness set $W$ of size polynomial in $n$, $d$, $1/r$, such that
an optimal CETSP solution for $W$ is a feasible lawn mowing tour of $P$?
\end{problem}


\subsection{Lower bounds: convergence}
As we enhance the witness set, we can ensure that the maximum distance of uncovered points from
an ensuing CETSP tour converges to zero. This can be achieved by a relatively
dense set of witness points (as in \cref{theorem:witness-size-less-r}) or
(more economically) by iteratively reducing this maximum distance when
necessary. We get the following straightforward implication.

\begin{lemma}
\label{le:LB}
Let $W\subset P$ be a finite set of witness points, such that no point of $P$ has a distance
larger than $\delta$ from a witness point, and let $T_W$ be a feasible CETSP tour of $W$. Then
  \[d_{\max}:=\max_{p\in P} \min_{q\in T_W\oplus C} d(p,q)\leq \delta.\]
\end{lemma}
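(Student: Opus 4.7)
The plan is to argue directly from the definitions using the fact that $W$ itself lies inside the covered region $T_W \oplus C$. The statement is essentially a continuity/packing observation: since witnesses are $\delta$-dense in $P$, and every witness is covered, every point of $P$ lies within $\delta$ of something that is covered.

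Concretely, I would proceed as follows. Fix an arbitrary point $p \in P$. By the density hypothesis on $W$, there exists a witness $w \in W$ with $d(p, w) \leq \delta$. By the assumption that $T_W$ is a feasible CETSP tour for $W$ with neighborhood radius $r$ (equivalently, with cutter $C$), we have $w \in T_W \oplus C$, since the center of $C$ visits some point within distance $r$ of $w$. Therefore $w$ itself is a valid choice of $q$ in the expression $\min_{q \in T_W \oplus C} d(p,q)$, yielding
\[
\min_{q \in T_W \oplus C} d(p,q) \;\leq\; d(p,w) \;\leq\; \delta.
\]
Taking the maximum over $p \in P$ gives $d_{\max} \leq \delta$.

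There is no real obstacle here; the lemma is a one-line consequence of the equivalence between CETSP feasibility for $W$ and the containment $W \subseteq T_W \oplus C$, combined with the $\delta$-density hypothesis. The only subtle point worth stating explicitly is that the CETSP formulation used in \cref{th:LB} treats each witness $w$ as an $r$-disk to be visited by the tour's center, which is exactly the condition $w \in T_W \oplus C$; once this is spelled out, the triangle-inequality step above completes the argument.
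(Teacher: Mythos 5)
Your proof is correct and matches the paper's intent exactly: the paper states \cref{le:LB} as a ``straightforward implication'' and omits the argument, and the one you supply---pick a witness $w$ within $\delta$ of $p$, observe $w\in T_W\oplus C$ by CETSP feasibility, and bound the minimum by $d(p,w)$---is precisely the intended reasoning. Nothing is missing.
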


In the remainder of this subsection, we will repeatedly apply the following theorem.

\begin{theorem}[Fekete and Pulleyblank~\cite{fp-tbms-98}]
\label{FP}
Let $E$ be a connected planar arrangement of edges and
$C$ be a disk of radius $r$.
If $L$ is the total length of the edges in $E$, there is a closed roundtrip of
length at most $2L + 2\pi r$ that visits all points of the boundary of $E \oplus C$.
\end{theorem}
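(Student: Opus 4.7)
The plan is to build an explicit roundtrip by ``lifting'' a doubled-edge traversal of $E$ to a curve on $\partial(E\oplus C)$: the portions parallel to edges of $E$ will contribute exactly $2L$, and the circular arcs filling in at vertices will contribute at most $2\pi r$ by a Gauss--Bonnet--style total-turning argument.

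First, since $E$ is connected, doubling each edge of $E$ yields an Eulerian multigraph of total edge length $2L$; I take an Eulerian circuit $W$, which is a closed walk traversing every edge of $E$ exactly twice, once in each direction. Second, I lift $W$ to a curve $T^\star$ on $\partial(E\oplus C)$: as $W$ traverses an edge $e$ in a given direction I route $T^\star$ along the offset segment of $\partial(E\oplus C)$ at distance $r$ to the left of $e$, so that the two traversals of $e$ together cover both parallel offset segments. The straight portion of $T^\star$ then has total length exactly $2L$. Third, between two consecutive edges of $W$ meeting at a vertex $v$, I close the gap by inserting the arc of $\partial(\{v\}\oplus C)\cap \partial(E\oplus C)$ between the two offset endpoints.

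The crux is bounding the total arc length: each arc's length equals $r$ times its central angle, and these angles equal the exterior turning angles of the closed curve $T^\star$ (the straight pieces contribute no turning). When $E$ is a tree, $E\oplus C$ is simply connected and $T^\star$ traces $\partial(E\oplus C)$ exactly once, so its total turning is $2\pi$ by Gauss--Bonnet, giving an arc total of $2\pi r$ and $\ell(T^\star)=2L+2\pi r$; by construction $T^\star$ visits every offset segment and every vertex arc of $\partial(E\oplus C)$, hence all of it.

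The step I expect to be the main obstacle is the case in which $E$ contains cycles, so that $E\oplus C$ has inner boundary components (``holes'') and the Euler characteristic drops. The naive turning budget then splits into $+2\pi$ on the outer boundary and $-2\pi$ on each inner boundary, while the roundtrip must still reach the inner components across the interior of $E\oplus C$. I would handle this by amortization: any cycle of length $\sigma$ in $E$ contributes $2\sigma$ to the doubled-traversal budget but bounds a hole whose inner boundary has length at most $\sigma$, and the slack---including the $2\pi r$ of arc length one ``saves'' by the inner boundary being concave rather than convex---pays both for traversing the inner boundary and for two short radial connectors of total length at most $2r$ joining the outer and inner components. Making this amortization uniform across nested cycles and across degenerate configurations in which holes pinch off or disappear entirely is the delicate technical step I would need to work out with care.
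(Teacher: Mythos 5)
First, a point of reference: the paper does not prove this statement at all --- it is imported as a black box from Fekete and Pulleyblank~\cite{fp-tbms-98} --- so there is no in-paper proof to compare yours against, and your attempt has to stand on its own. Your outline does follow what I believe is the right general strategy (double the edges, take an Eulerian circuit, offset it by $r$, and control the circular arcs by a total-turning argument), but there is a genuine gap already in the tree case, before you ever reach the cycles you flag as the hard part. You assert that the straight portion of $T^\star$ has length \emph{exactly} $2L$ and that the arcs total \emph{exactly} $2\pi r$ because ``the straight pieces contribute no turning.'' Neither claim is true: wherever two edges meet at a vertex with a sufficiently small angle on the left-hand side of the traversal, the two left-offset segments intersect and $\partial(E\oplus C)$ has a \emph{concave} corner there, which carries negative turning. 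The Umlaufsatz then only gives (total positive arc turning) $=2\pi+{}$(total concave turning), so the arcs can sum to strictly more than $2\pi r$; simultaneously the offset segments are trimmed at those corners, so the straight portion is strictly shorter than $2L$. The bound survives only because these two errors cancel in the right direction: a concave corner of turning $-\theta$ trims $2r\tan(\theta/2)\ge r\theta$ off the segments, which dominates the extra $r\theta$ of arc length incurred elsewhere. This balancing inequality is the actual crux of the tree case and is absent from your argument. (A clean packaging: the left $r$-offset of any closed polygonal walk $W$ of length $2L$, with arcs inserted at right turns and corners cut at left turns, has length at most $2L-r\cdot(\text{total signed turning of }W)$; one then needs an Eulerian circuit of the doubled graph with turning number $-1$, which also has to be argued.)

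For the cyclic case you candidly admit the amortization is not worked out, and as sketched it does not close. The ``budget'' of $2\sigma$ for a cycle of length $\sigma$ is not slack: both traversals are consumed by the two offset copies (outer-facing and hole-facing) of the cycle. The only genuine per-hole savings is the roughly $2\pi r$ by which an inward offset is shorter than the cycle, against which you must charge the connectors --- and these cost up to $4r$ per hole (the tour runs along the offsets at distance $r$ on either side of $\gamma$, so the gap to bridge is $2r$, twice), not the $2r$ you state. You must moreover splice everything into a \emph{single} closed curve, verify that every circular arc of $\partial(E\oplus C)$ around every vertex is actually traversed by some visit of the Eulerian circuit to that vertex, and handle the degeneracies you yourself list (holes that pinch off, nested and edge-sharing cycles, inward offsets that self-intersect). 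That is where the real content of the theorem lies, and your proposal stops short of it.
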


In the sequel, $E$ will be the edges of a tour, leading to the following.

\begin{theorem}
\label{th:3}
Let $W$ be a witness set, and let $T_W$ be an optimal CETSP tour of $W$. Let the
distance between uncovered points and covered region be
bounded by the cutter radius $r$, i.e.,
$d_{\max}:=\max_{p\in P} \min_{q\in T_W\oplus C} d(p,q)\leq r$.
Then $P$ has a lawn mowing tour of length at most $3\ell(T_W)+2\pi r + 2r$.
\end{theorem}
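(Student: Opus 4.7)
The plan is to combine the given CETSP tour $T_W$ with a short roundtrip that visits the boundary of its coverage region, so that together they sweep out all of $P$, and then to splice the two tours into a single closed walk. The key tool is Theorem~\ref{FP} applied to $T_W$ itself.

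First, I would invoke Theorem~\ref{FP} with $E := T_W$, which is a connected planar edge arrangement of total length $\ell(T_W)$, to obtain a closed roundtrip $T'$ of length at most $2\ell(T_W) + 2\pi r$ that visits every point of $\partial(T_W \oplus C)$. I would then argue that the cutter sweep of $T_W \cup T'$ covers all of $P$. Every point in $P \cap (T_W \oplus C)$ is already covered by $T_W$ alone. For every uncovered point $p \in P \setminus (T_W \oplus C)$, the hypothesis $d_{\max} \le r$ yields a point $q \in T_W \oplus C$ at distance at most $r$ from $p$; since $p$ lies outside the closed set $T_W \oplus C$, the closest such point $q$ must lie on its boundary $\partial(T_W \oplus C)$, and since $T'$ visits every boundary point, this forces $q \in T'$ and hence $p \in T' \oplus C$.

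Second, I would splice $T_W$ and $T'$ into one closed tour. By definition, every $y \in \partial(T_W \oplus C)$ satisfies $d(y, T_W) = r$; the minimum of $d(x,y)$ over $x \in T_W$ and $y \in \partial(T_W \oplus C)$ therefore equals $r$, and is attained for some pair $(x_0, y_0)$ by compactness of $T_W$ (a finite union of segments) and of $\partial(T_W \oplus C)$. Since $T'$ passes through $y_0$, I would form a closed walk that traverses $T'$, detours along the segment $y_0 x_0$, traverses $T_W$, and returns along $x_0 y_0$, yielding a feasible lawn mowing tour of length at most
\[
\ell(T') + \ell(T_W) + 2\, d(x_0, y_0) \;\le\; \bigl(2\ell(T_W) + 2\pi r\bigr) + \ell(T_W) + 2r \;=\; 3\ell(T_W) + 2\pi r + 2r.
\]

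The main obstacle is making the two topological facts rigorous: that the nearest point in the closed set $T_W \oplus C$ of an exterior point of $P$ lies on the boundary (standard for closed subsets of $\mathbb{R}^2$), and that the minimum distance between $T_W$ and $\partial(T_W \oplus C)$ is attained at value exactly $r$ (which requires compactness plus the identity $d(y, T_W) = r$ for every boundary point $y$). Once these are in place, the length bound follows from a routine addition.
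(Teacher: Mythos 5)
Your proposal is correct and follows essentially the same route as the paper's proof: apply Theorem~\ref{FP} to $E=T_W$ to get a boundary tour of length at most $2\ell(T_W)+2\pi r$ covering all points within distance $r$ of $\partial(T_W\oplus C)$, then concatenate with $T_W$ at cost $2r$. You merely make explicit two details the paper leaves implicit (that the nearest point of $T_W\oplus C$ to an exterior point lies on its boundary, and that the splicing segment has length exactly $r$), which is a faithful elaboration rather than a different argument.
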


\begin{proof}
By assumption, any uncovered point $p\in P$
is within distance $r$ from the boundary of
$E \oplus C$. Therefore, all points that are not covered by $T_W$ will be covered
by a tour of the boundary of $E \oplus C$; by \cref{FP}, there is such
a tour of length at most $2\ell(T_W) + 2\pi r$. Concatenating this tour with $T_W$
(at additional cost at most $2r$) achieves full coverage of $P$ at cost
at most $3\ell(T_W)+2\pi r + 2r$, as claimed.
\end{proof}

Furthermore, we can conclude the following.

\begin{lemma}
Let $W_0\subset W_1\subset\ldots\subset W_i\subset\ldots\subset P$ be a
sequence of witness sets, with $T_{W_i}$ being the corresponding sequence of optimal CETSP tours of $W_i$.
If the maximum distance $d_{\max}^{(i)}:=\max_{p\in P} \min_{q\in T_{W_i}\oplus C} d(p,q)$
of uncovered points to the covered region converges
to zero, then the area of $P\setminus T_{W_i}\oplus C$ converges to zero.
\end{lemma}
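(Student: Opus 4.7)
The plan is to show that the uncovered region $U_i := P \setminus (T_{W_i} \oplus C)$ is contained in a thin collar of uniform thickness $d_{\max}^{(i)}$ around the boundary of the covered region $F_i := T_{W_i} \oplus C$, and that the perimeter of $F_i$ is bounded uniformly in $i$; a shrinking thickness multiplied by a uniformly bounded circumference then drives the area to zero.

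First I would observe that by the very definition of $d_{\max}^{(i)}$, every point $p \in U_i$ lies within Euclidean distance $d_{\max}^{(i)}$ of $F_i$. Since $F_i$ is closed and $p$ is outside $F_i$, its nearest point in $F_i$ actually lies on $\partial F_i$, so $p$ is within distance $d_{\max}^{(i)}$ of $\partial F_i$. Hence $U_i$ lies inside the $d_{\max}^{(i)}$-tubular neighborhood of $\partial F_i$. Because $T_{W_i}$ is polygonal and $C$ is a disk, $\partial F_i$ consists of straight segments and circular arcs of radius $r$; it is rectifiable, and a standard disk-cover estimate (cover $\partial F_i$ by $O(\mathrm{perim}(F_i)/d_{\max}^{(i)})$ disks of radius $d_{\max}^{(i)}$) gives
\[
\mathrm{area}(U_i) \;\leq\; 2\,\mathrm{perim}(F_i)\cdot d_{\max}^{(i)} \;+\; \pi\,(d_{\max}^{(i)})^2.
\]

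Second, I would bound $\mathrm{perim}(F_i)$ uniformly in $i$. Applying \cref{FP} to the polygonal arrangement formed by $T_{W_i}$ with the cutter $C$, there is a closed roundtrip of length at most $2\ell(T_{W_i}) + 2\pi r$ that visits every point of $\partial(T_{W_i}\oplus C) = \partial F_i$; since any closed roundtrip through every point of a closed rectifiable curve has length at least the total length of that curve, $\mathrm{perim}(F_i) \leq 2\ell(T_{W_i}) + 2\pi r$. An optimal lawn mowing tour $T_P$ for $P$ is trivially a feasible CETSP solution for each $W_i \subset P$, so optimality of $T_{W_i}$ yields $\ell(T_{W_i}) \leq \ell(T_P)$ (the monotonicity underlying \cref{th:LB}). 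Substituting back, $\mathrm{area}(U_i) \leq C \cdot d_{\max}^{(i)} + \pi (d_{\max}^{(i)})^2$ with the constant $C := 4\ell(T_P) + 4\pi r$ independent of $i$, which tends to zero with $d_{\max}^{(i)}$.

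The step I expect to demand the most care is the tubular-neighborhood inequality: $\partial F_i$ may trace certain pieces multiply or self-intersect if the polygonal tour $T_{W_i}$ backtracks, so ``$\mathrm{perim}(F_i)$'' should be read as the total $1$-dimensional length of $\partial F_i$ (counted with multiplicity), which is exactly what \cref{FP} bounds, and the disk-cover argument should be applied to this total length rather than only to the outer boundary. Once this bookkeeping is in order, the convergence $\mathrm{area}(U_i)\to 0$ is immediate from $d_{\max}^{(i)}\to 0$ and the uniform constant $C$.
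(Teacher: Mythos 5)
Your proposal is correct and follows essentially the same route as the paper: confine the uncovered set to a tube of thickness $d_{\max}^{(i)}$ around $\partial(T_{W_i}\oplus C)$, bound that boundary's length via \cref{FP}, and bound $\ell(T_{W_i})$ uniformly in $i$ so the tube's area is $O(d_{\max}^{(i)})$. The only (harmless) deviation is that you obtain the uniform bound $\ell(T_{W_i})\le \ell(T_P)$ directly from the monotonicity in \cref{th:LB}, whereas the paper routes it through \cref{th:3} applied at the first index $k$ with $d_{\max}^{(k)}\le r$; your version is slightly cleaner and yields the same conclusion.
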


\begin{proof}
By assumption, the sequence $d_{\max}^{(i)}$ converges to zero, so there
is an index $k$ such that $d_{\max}^{(i)}\leq r$ for all $i\geq k$. Then by
\cref{th:LB}, all elements of the sequence
$\ell(T_{W_i})$ are lower bounds of any feasible LMP tour,
so by \cref{th:3}, they satisfy $\ell(T_{W_i})\leq 3\ell(T_{W_k})+2\pi r + 2r$,
i.e., remain bounded from above.
Furthermore, all points in the set $P\setminus T_{W_i}\oplus C$
of uncovered points by $T_{W_i}$ are within distance $d_{\max}^{(i)}$
of the boundary of $T_{W_i}\oplus C$; by \cref{FP},
this boundary has length of at most $L=2\ell(T_{W_i})+2\pi r$. Moreover,
the area of all points within a distance $d_{\max}^{(i)}$
of any curve of length $L$ is bounded by $A_i:=(2L+2\pi)d_{\max}^{(i)}$,
so the total uncovered area is bounded by $A_i\leq (4\ell(T_{W_i})+ 4\pi r + 2\pi)d_{\max}^{(i)}\leq
(12 \ell(T_{W_k}) + 12\pi r + 8r + 2\pi)d_{\max}^{(i)}$.
Because $(12 \ell(T_{W_k}) + 12\pi r + 8r + 2\pi)$ is constant and $d_{\max}^{(i)}$
by assumption converges to zero, we conclude that $A_i$ also converges to zero,
implying the claim.
\end{proof}

We summarize.

\begin{theorem}
By picking an appropriate sequence of witness sets $W_i$, we can guarantee that
for the ensuing sequence of optimal CETSP solutions $T(W_i)$, the following holds.
\begin{itemize}
\item The maximum distance of uncovered points to the covered region converges to zero.
\item The total uncovered area converges to zero.
\end{itemize}
\end{theorem}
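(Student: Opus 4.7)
The plan is to construct the sequence of witness sets explicitly so that their density, namely the maximum distance $\delta_i$ of any point of $P$ to its nearest witness in $W_i$, tends to zero. Once this is in place, both bullets follow by combining the preceding lemmas as black boxes.

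First, I would choose $W_i$ by overlaying $P$ with an orthogonal grid of mesh width $\varepsilon_i/\sqrt{2}$, where $\varepsilon_i \to 0$ (for instance, $\varepsilon_i = r/2^i$), and selecting one witness point from each nonempty intersection of $P$ with a grid cell, exactly as in the proof of \cref{theorem:witness-size-less-r}. Using dyadic refinement of the grid, I can moreover arrange $W_i \subseteq W_{i+1}$ so that the sequence is nested, as required by the statement. Each $W_i$ is finite (it has at most $\lceil 2d^2/\varepsilon_i^2 \rceil$ elements), and by construction the density satisfies $\delta_i \leq \varepsilon_i$, so $\delta_i \to 0$.

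Given this choice of witness sets, \cref{le:LB} applied to each $W_i$ yields $d_{\max}^{(i)} \leq \delta_i$, so $d_{\max}^{(i)} \to 0$; this is exactly the first bullet. The second bullet then follows immediately from the preceding (unnumbered) convergence lemma: once $d_{\max}^{(i)} \to 0$, the area of $P \setminus (T_{W_i} \oplus C)$ also converges to zero. Note that the hypothesis $d_{\max}^{(i)} \leq r$ needed to invoke \cref{th:3} inside that lemma is automatically satisfied for all sufficiently large $i$, because $\delta_i \to 0$.

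I expect no real obstacle: the substantive content of this statement has already been discharged in the earlier lemmas, most importantly in the chain \cref{th:LB}, \cref{FP}, and \cref{th:3}, which keeps $\ell(T_{W_i})$ bounded from above by a constant once $d_{\max}^{(i)} \leq r$. The present theorem is therefore a packaging step, and the only point that genuinely needs to be checked is the existence of a nested witness sequence with vanishing density, which the grid construction above provides directly for any bounded region $P$.
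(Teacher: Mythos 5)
Your proof is correct and follows essentially the same route the paper intends: the paper states this theorem as a summary without an explicit proof, pointing to the grid construction of \cref{theorem:witness-size-less-r}, \cref{le:LB}, and the preceding convergence lemma, which is precisely the chain you assemble (with the nested dyadic refinement being a reasonable way to make the sequence $W_0\subset W_1\subset\ldots$ explicit).
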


\subsection{Solving CETSP instances}
Solving CETSP instances is more challenging than solving TSP instances.
In 2016, Coutinho et al.~\cite{coutinho2016branch} proposed an
algorithm based on branch-and-bound and \emph{Second-Order Cone Programming}.
Each branch-and-bound node is associated with a partial
tour that visits the given subset of vertices in particular order.  At the root
node, the algorithm chooses three vertices to generate an initial sequence.
The problem of computing the exact coordinates from a predefined sequence can
be formulated as a {Second-Order Cone Problem} (SOCP).  If the solution at the
root node is feasible, the solution is optimal and the algorithm terminates.
Otherwise, the algorithm branches into three subproblems, one for every edge in the current solution.
At each edge a currently uncovered vertex is inserted.
A node is pruned if the cost is at least equal to the best known upper bound or if its associated solution is feasible.

\begin{minipage}{\columnwidth}
  \begin{alignat}{2}
      \text{minimize} \quad && \sum_{k=1}^q z_k
  \end{alignat}
  subject to
  \vspace{-.15cm}
  \begin{alignat}{2}
       x_{i_{k-1}} - x_{i_k} = w_k &\qquad & k = 0,\dots, q \label{socp:cetsp-aux-1} \\
       y_{i_{k-1}} - y_{i_k} = u_k  && k = 0,\dots, q \\
       \bar{x_k} - x_k = s_k  && k = 0,\dots, q \\
       \bar{y_k} - y_k = t_k && k = 0,\dots, q \label{socp:cetsp-aux-4} \\
       w_k^2+u_k^2 \leq z_k^2  && k = 0,\dots, q \label{socp:cetsp-bound-length}\\
       s_k^2+t_k^2 \leq r_k^2  && k = 0,\dots, q \label{socp:cetsp-circle-constr}\\
       z_k \geq 0; \hspace{1mm}x_k,y_k,w_k,u_k,s_k,t_k \hspace{1mm}\text{free} && k = 0,\dots, q \nonumber
  \end{alignat}
\end{minipage}
\vspace{5mm}

The formulation by Coutinho et al.~\cite{coutinho2016branch} is based on work of Mennell~\cite{mennell2009heuristics}.
Let $S=\{i_0,\dots,i_q\},\,q<n$ be a partial sequence of vertices during the execution of the branch-and-bound algorithm.
The desired solution must provide exact values for the coordinates $(x_{i_k}, y_{i_k}),\,k=0,\dots,q$ such that the length of the partial tour is minimized. Denoting $i_{-1}=i_q$, the formulation is as follows.

The SOCP uses variables $z_k$ to represent the distance between subsequent
vertices $i_{k-1}$ and $i_k$ in the given sequence $S$.  To compute
the distance, four auxiliary variables $w_k,u_k, s_t$, and $t_k$ are introduced
in Constraints~(\ref{socp:cetsp-aux-1})--(\ref{socp:cetsp-aux-4}) that represent
differences of coordinates that are used to calculate Euclidean distances.
Constraints~(\ref{socp:cetsp-bound-length}) define the length
of the edge connecting subsequent vertices.
Constraints~(\ref{socp:cetsp-circle-constr}) ensure that the hitting points will
lie within their respective circles around $v_i$.  Andersen et
al.~\cite{andersen2003implementing} showed that a SOCP can be solved in
polynomial time.  The branch-and-bound approach of Coutinho et
al.~\cite{coutinho2016branch} makes use of well-known optimization software that
is also capable of addressing this class of problems.

\subsection{Upper bounds: obtaining feasible solutions}
\label{subsec:upper}
Given is a polygon $\Pol$, a circular cutter $C$ and a maximum number of iterations. We start by initializing the witness set $W_0$ with some points $p \in \Pol$.


In each iteration $i$, we start with generating a CETSP \emph{start tour} $\T_i$ from $W_i$;
see \cref{fig:eval:witnessexamples:ubexamples} for families of examples.
If $T_i$ is a feasible LMP tour, it is optimal; otherwise,
we can extract the \emph{uncovered regions} from $\Pol \setminus (\T_i \oplus C)$.

As shown in \cref{fig:eval:witnessexamples:ubexamples}
for each of the lower bounds,
we then execute the following procedure to modify $T_i$ until the tour becomes feasible.
For every uncovered region $R_j$, we construct a witness set $W_{R_j}$
and fix the closest point $p_j \in \partial\T_i$ to $R_j$
to be visited by a tour $T_i$ computed by the CETSP solver.

After the tour $T_{i}$ is computed, we add $W_{R_j}$ to obtain $W_{i+1}$,
so that the next start tour covers the points in $W_{R_j}$ from the beginning.
(See \cref{sec:generating-witness-sets} and again~\cref{fig:eval:witnessexamples:ubexamples}
for practical strategies for choosing $W_{R_j}$.)
Subsequently, we extend $\T_i$ with $\T_{R_j}$, by connecting both tours via the point
$p_j$.
Once we obtain a feasible tour $\T_i$ (such as in the examples of \cref{fig:eval:witnessexamples:ubexamples}),
we can update the best known solution so far and continue with the next iteration for lower
and upper bounds, see \cref{algo:cetsp-feasible}.

\begin{figure}[h]
  \centering
  \begin{subfigure}[t]{.5\columnwidth}
  \centering
    \includegraphics[width=\linewidth]{./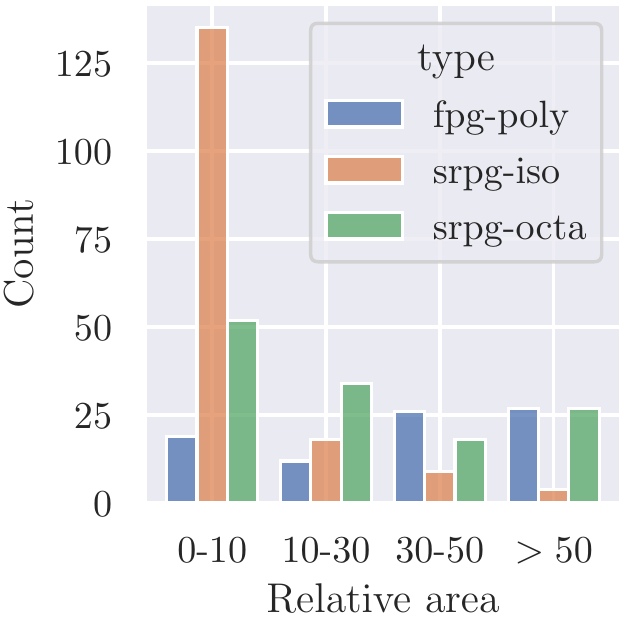}
    \caption{Size distribution.}
  \end{subfigure}\hfill
   \begin{subfigure}[t]{.5\columnwidth}
  \centering
    \includegraphics[width=\linewidth]{./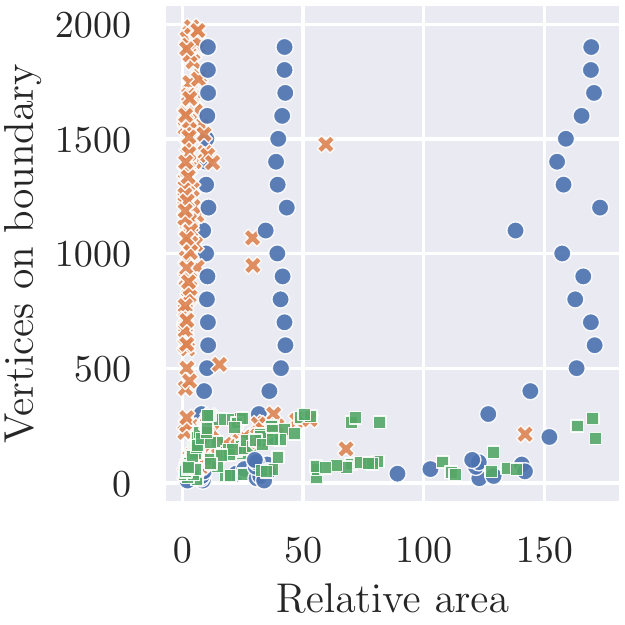}
     \caption{Polygon complexity.}
  \end{subfigure}
  \caption{Distribution of instances, subdivided into different types. See \cref{fig:distribution-enlarged} for an enlarged version of the small instances in (b).}
\label{fig:distribution}
\end{figure}

\begin{figure*}[t]
  \centering
    \includegraphics[width=.32\linewidth]{./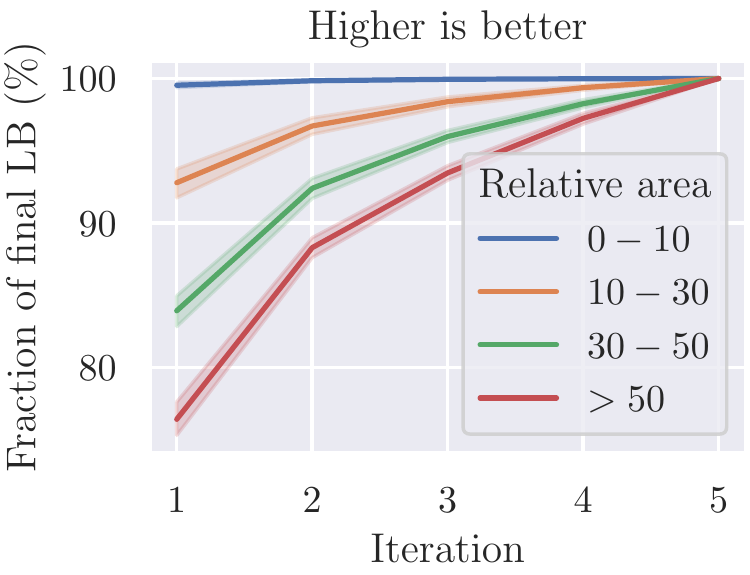}
    \includegraphics[width=.32\linewidth]{./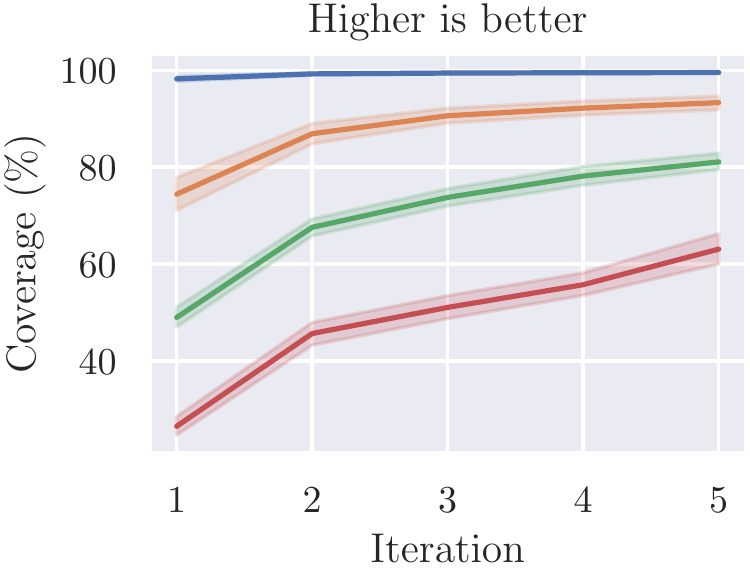}
    \includegraphics[width=.32\linewidth]{./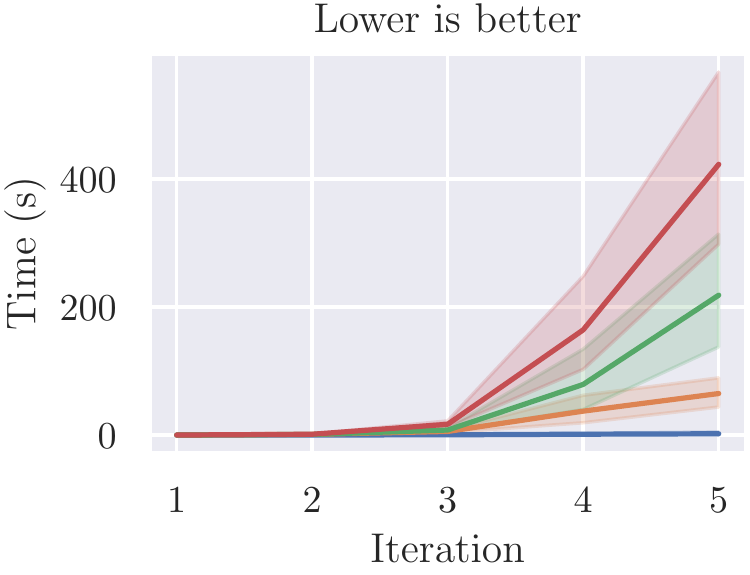}
  \caption{%
    Improvement of lower bound and coverage over iterations, where each iteration extends the witness set.
    The lower bound is measured in relation to the final lower bound. All three plots share the same legend.
    For small instances, the first lower bound is already very close to the final upper bound, while for medium instances, it is still around \SI{93}{\percent} of the final lower bound.
    The larger the instances, the stronger the increase with each iteration. We get a similar result for the percentage of covered area.
    }
  \label{fig:eval:lb:iterations}
\end{figure*}

\begin{figure*}[t]
  \centering
    \includegraphics[width=0.49\linewidth]{./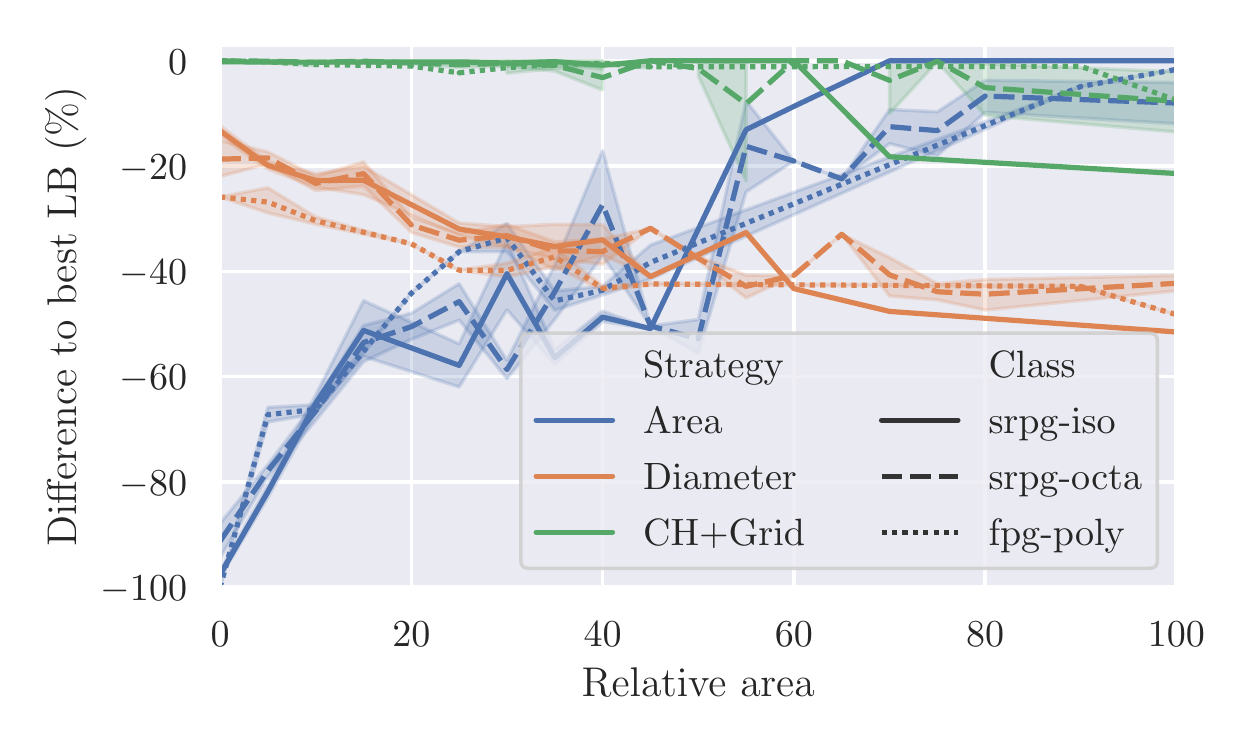}
    \includegraphics[width=0.49\linewidth]{./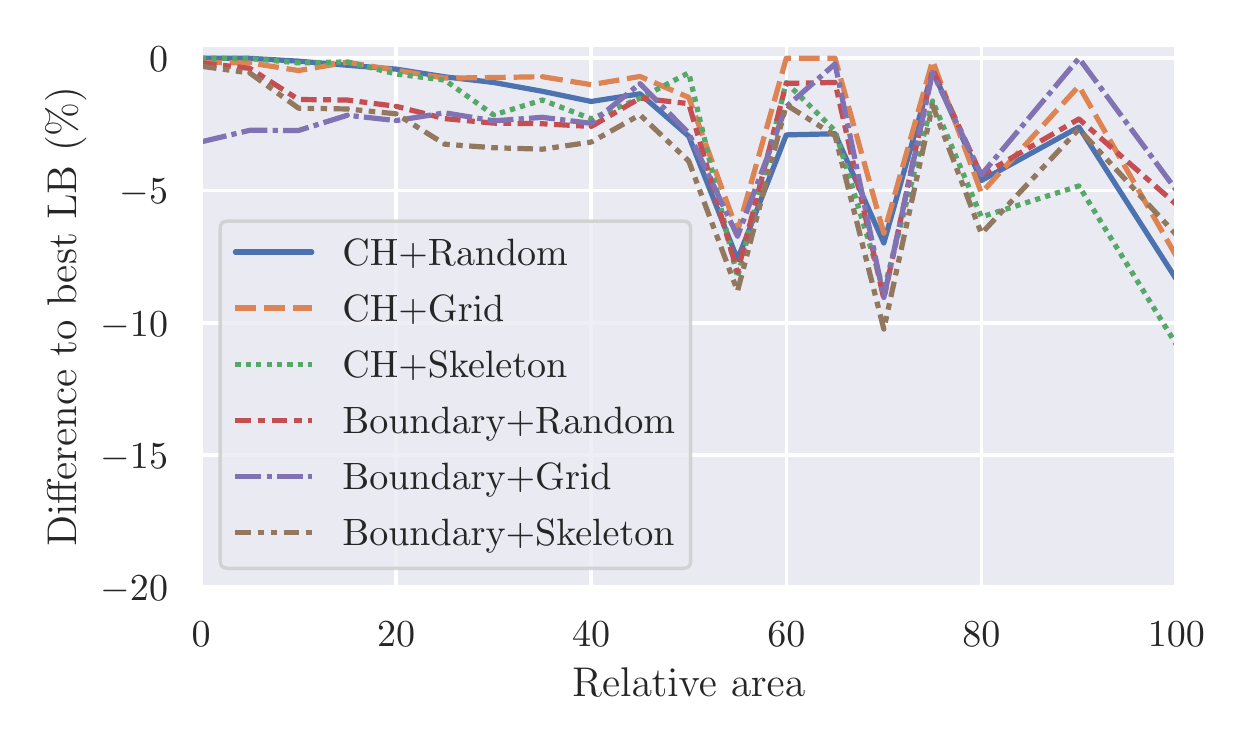}
    \vspace{-0.25cm}
  \caption{%
    Comparison of lower bounds based on area or diameter with the lower bounds provided by CETSP for different witness strategies (left).
    Up to a relative area of \num{50}, the simple lower bounds are more than \SI{20}{\percent} smaller than the CETSP on the convex hull and random witnesses for all classes of instances.
    Between the different witness strategies (right), the deviation is relatively small.
    Just the strategies that start with a subset of the convex hull have a small advantage.}
    \label{fig:eval:lb}
\end{figure*}

\section{Experiments and evaluations}
\label{sec:exp}

Now we describe our practical implementation and various
algorithm engineering aspects, along with our experimental study for demonstrating the practical
usefulness of our theoretical concept.
All experiments were carried out on a regular desktop workstation with an
AMD Ryzen 7 5800X ($8\times\SI{3.8}{\GHz}$) CPU and \SI{128}{\giga\byte} of RAM\@.
The code and data are publicly available\footnote{Source code and data: \url{https://github.com/tubs-alg/near-optimal-lawn-mowing-tours}}.
Consider \cref{fig:eval:witnessexamples:ubexamples} for practical illustrations of
the progression of lower bounds for different witness strategies,
and for the corresponding upper bounds.
For carrying out our experiments and evaluations, we used the \emph{FPG}, \emph{SRPG ISO} and \emph{SRPF OCTA} instances from the Salzburg
Database of Geometric Inputs~\cite{eder2020salzburg}. See \cref{fig:distribution} for the overall
distribution, and \cref{fig:benchmarks} for a sample of individual instances.
We considered polygons with up to $n=2000$ vertices, in combination with
cutters of varying size.
Overall, this resulted in several hundred instances.
As it turned out, the parameter \emph{relative area}, i.e., the ratio between the areas of the convex hull
of $P$ and the cutter $C$, was more significant for
the difficulty of an instance than the vertex number~$n$, which is why we mostly focused
on the variance in relative area.

\subsection{Generating witness sets}\label{sec:generating-witness-sets}

As we established in the previous section, enhancing the witness set leads to
convergence of the resulting lower bound tours, at least in terms of covered area
and distance to uncovered points. In a practical setting, solving the involved
CETSP instances becomes a bottleneck, so it becomes crucial to achieve
good lower bounds with witness sets of limited size.

For the initial witness set $W_0$, we focus on the exterior 
of the polygon,
as minimizing the length in further iterations will automatically pull the CETSP tour towards the interior;
so we choose up to \num{15} witnesses either from the convex hull, or the boundary of $\Pol$.
If there are more than \num{15} candidate points, we apply a greedy max-min-dispersion to achieve a well-distributed witness set.
This may already lead to optimal solutions for small instances.

For further iterations, we insert \num{10} more witnesses into regions 
that are not yet covered.
For multiple uncovered regions, we do a random assignment, according to region size.
Within each uncovered region $R_j$, we select the assigned number of new witnesses either (i) randomly, (ii) from a grid, or (iii) from the straight skeleton~\cite{aichholzer1996novel}.
To avoid strongly clustered sets (which are bound to occur for the straight skeleton),
we employ a dispersion technique.
To prevent new witnesses from being close to the boundary but deep inside of the uncovered regions,
we cluster the candidate points using $k$-means, and select a random witness from each cluster.

The two strategies for choosing the initial witness set and the three strategies for extending it result in six general strategies, some of which are shown in the first three columns of \cref{fig:eval:witnessexamples:ubexamples}.

\begin{figure*}[t]
  \centering\hfill
  \begin{subfigure}[b]{.66\linewidth}
    \centering
    \includegraphics[width=\linewidth]{./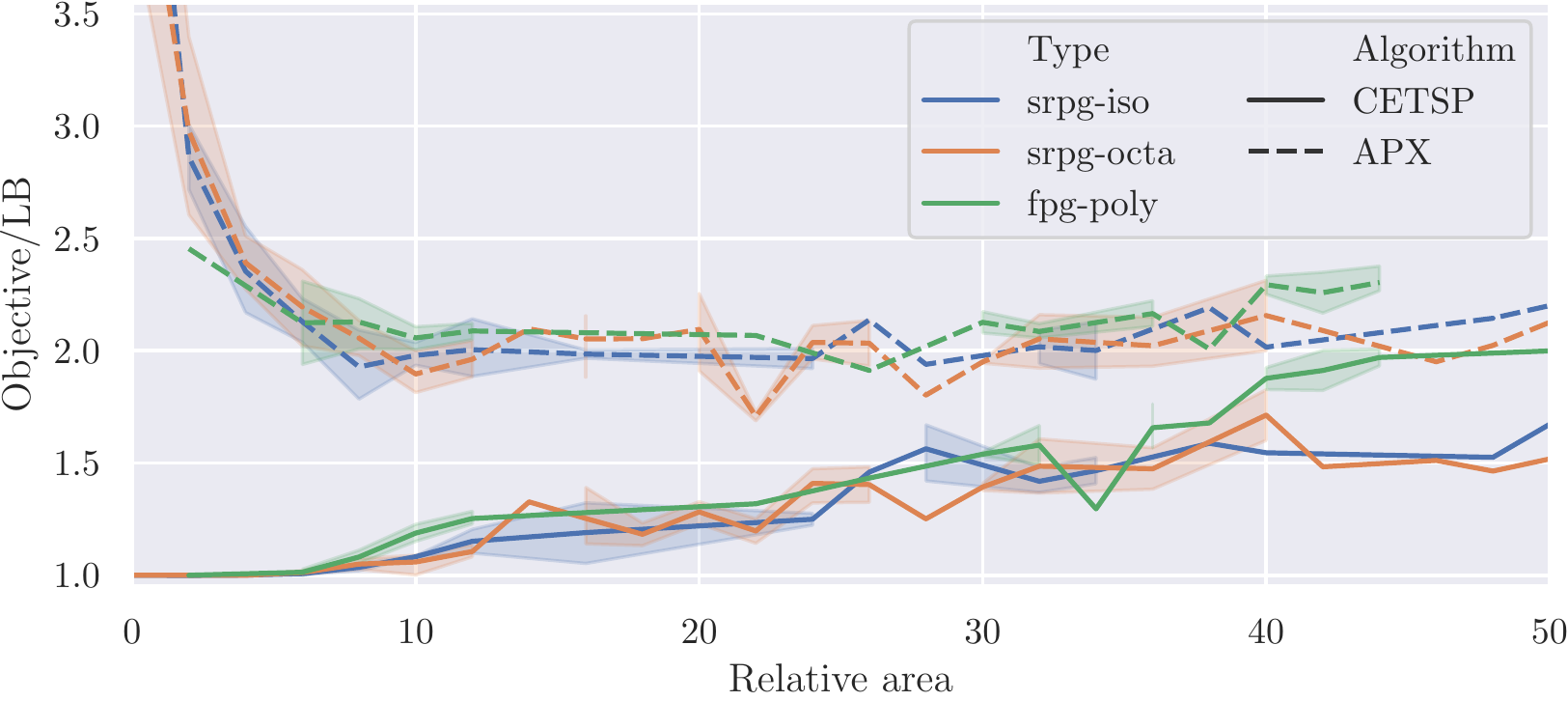}\\
    \caption{}\label{fig:eval:ub}
  \end{subfigure}\hfill
  \begin{subfigure}[b]{.15\linewidth}
    \centering
    \includegraphics[width=\linewidth]{./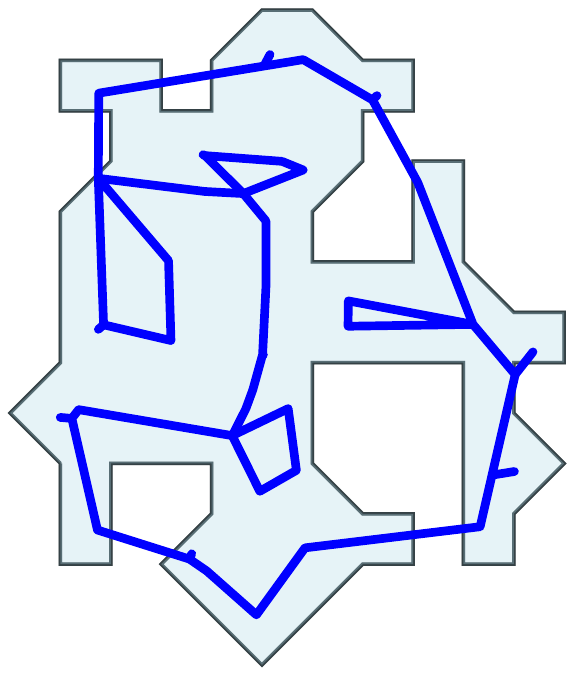}
    \vspace{1cm}
    \caption{}
    \label{fig:approx-comparison:cetsp}
  \end{subfigure}\hfill
  \begin{subfigure}[b]{.15\linewidth}
    \centering
    \includegraphics[width=\linewidth]{./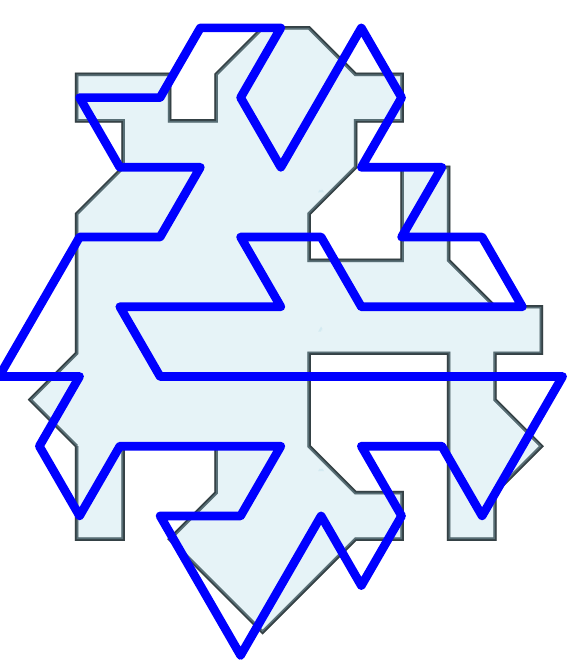}
    \vspace{1cm}
    \caption{}
    \label{fig:approx-comparison:approx}
  \end{subfigure}\hfill
  \vspace{-0.25cm}
  \caption{%
    \textbf{(a)} Comparison of the solution quality of CETSP approach (plus heuristic) and approximation algorithm (APX).
    We compare the best solution of any of the witness strategies.
    Our approach often yields optimal solutions for instances up to a relative area of \num{6}, see \cref{fig:eval:optexamples} for examples.
    \textbf{(b)} An upper bound from our approach with a value of \num{65.35}.
    \textbf{(c)} An upper bound from the approximation of Arkin et al.~\cite{Arkin2000} with a value of \num{88.33}.
    }
    \vspace{-.7cm}
\end{figure*}
\begin{figure*}
  \centering
  \includegraphics[width=.8322\linewidth]{./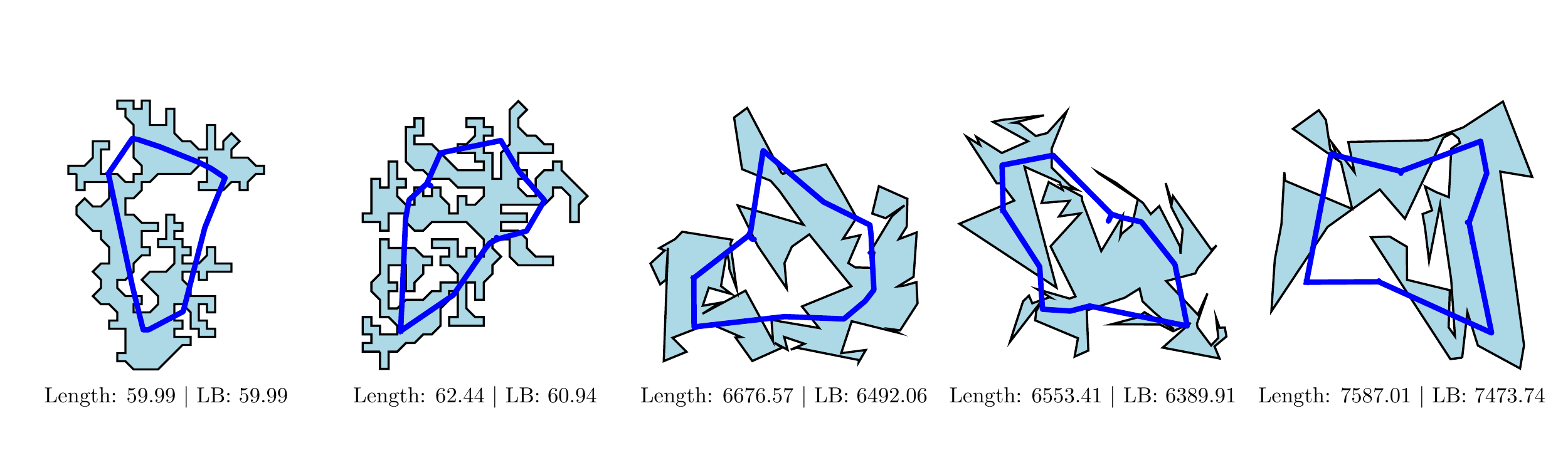}
  \vspace{-0.25cm}
  \caption{Examples of instances solved to (near) optimality ($\leq \SI{3}{\percent}$ above the lower bound).}
  \label{fig:eval:optexamples}
\end{figure*}

\subsection{Practical quality of lower bounds}

In each iteration, our algorithm computes an optimal CETSP tour on the current
witness set; this results in an increasing sequence of lower bounds (due to \cref{th:LB}),
which are asymptotically tight, at least in terms of covered area
and distance to uncovered points.
Thus, computing exact CETSP solutions becomes critical. While we could still solve clustered instances
with up to \num{1000} points to optimality, they do not necessarily yield good
lower bounds for the LMP\@; on the other hand, well-distributed witness sets much beyond 50 points were harder to
solve optimally, but still yielded better bounds, illustrating
the delicate balance between the structure of witness sets, CETSP solvability,
and quality of lower bounds.

In the following, we use the best lower bound obtained within \num{5} iterations.
The calls to the CETSP solver are limited to \SI{30}{\minute}, after which the
best lower bound 
is returned. \Cref{fig:eval:lb:iterations} shows the progress of
lower bound and coverage. 
With each iteration, the CETSP gets harder to solve, and the
improvements on lower bound and coverage decrease. As shown in \cref{fig:eval:lb} (left),
the resulting lower bound after the final iteration is notably better than the lower bound derived
from the diameter or area of the polygon.
This is true for all three classes of instances, up to a relative area of about \num{50},
after which the CETSP strategy gets bogged down by too many witnesses.
The lower bounds of the individual witness strategies are surprisingly close, as shown in \cref{fig:eval:lb} (right).
Starting with the convex hull gives a small advantage; otherwise, the best strategies vary strongly for the instances.
Example runs of some strategies over 5 iterations are shown in \cref{fig:eval:witnessexamples:ubexamples}.
The best way to improve these (and other) lower bounds would be to add further iterations, at the expense of runtime.

\subsection{Practical quality of upper bounds}
For small instances, the CETSP procedure frequently returns feasible (and hence, optimal) LMP tours.
Otherwise, we perform up to \num{5} iterations in which we add \num{40} witnesses into each uncovered
area and connect the tour to the previous tour, as described in \cref{subsec:upper}.
For medium-sized instances, this strategy often leads to feasible solutions that
are significantly better than the approximation algorithm;
see \cref{fig:eval:ub,fig:approx-comparison:cetsp,fig:approx-comparison:approx} for an example.
For instances up to a relative area of around \num{6}, we often obtain optimal solutions, or solutions with a negligibly small gap.
For instances up to a relative area of around \num{30}, we are at most \SI{50}{\percent} above the lower bound,
and only slightly worse for \num{40}.
For larger instance sizes, \cref{th:3} still bounds the relative gap to \SI{200}{\percent}, but the CETSP
computation becomes a bottleneck.
For instances up to a relative area of \num{40}, the approximation algorithm
turns out to be relatively stable at \SI{100}{\percent} above the lower bound,
which is considerable better than the
classical worst-case approximation guarantee by Arkin et al.~\cite{Arkin2000},
which is at $2\sqrt{3}+\varepsilon\approx3.46$ even when disregarding
some finer technical aspects:
That method is based on combining
a proximity approximation for considering region boundary and interior
(incurring an \emph{asmyptotic}  factor of 3), a grid conversion
(incurring a factor of $2/\sqrt{3}$), and a TSP approximation
(incurring a theoretical factor of $1+\varepsilon$ when using
a geometric PTAS, and a larger factor when using a more practical
approximation method); this results in an overall factor no better than
$2\sqrt{3}+\varepsilon\approx3.46$, and even more when accounting
for a non-asymptotic approximation factor with an additive term (see~\cite{Arkin2000})
and PTAS limitations.
For our experiments, however, the TSP instances in the approximation algorithm
were solved to optimality.

\begin{figure*}
  \centering
  \includegraphics[width=0.8322\linewidth]{./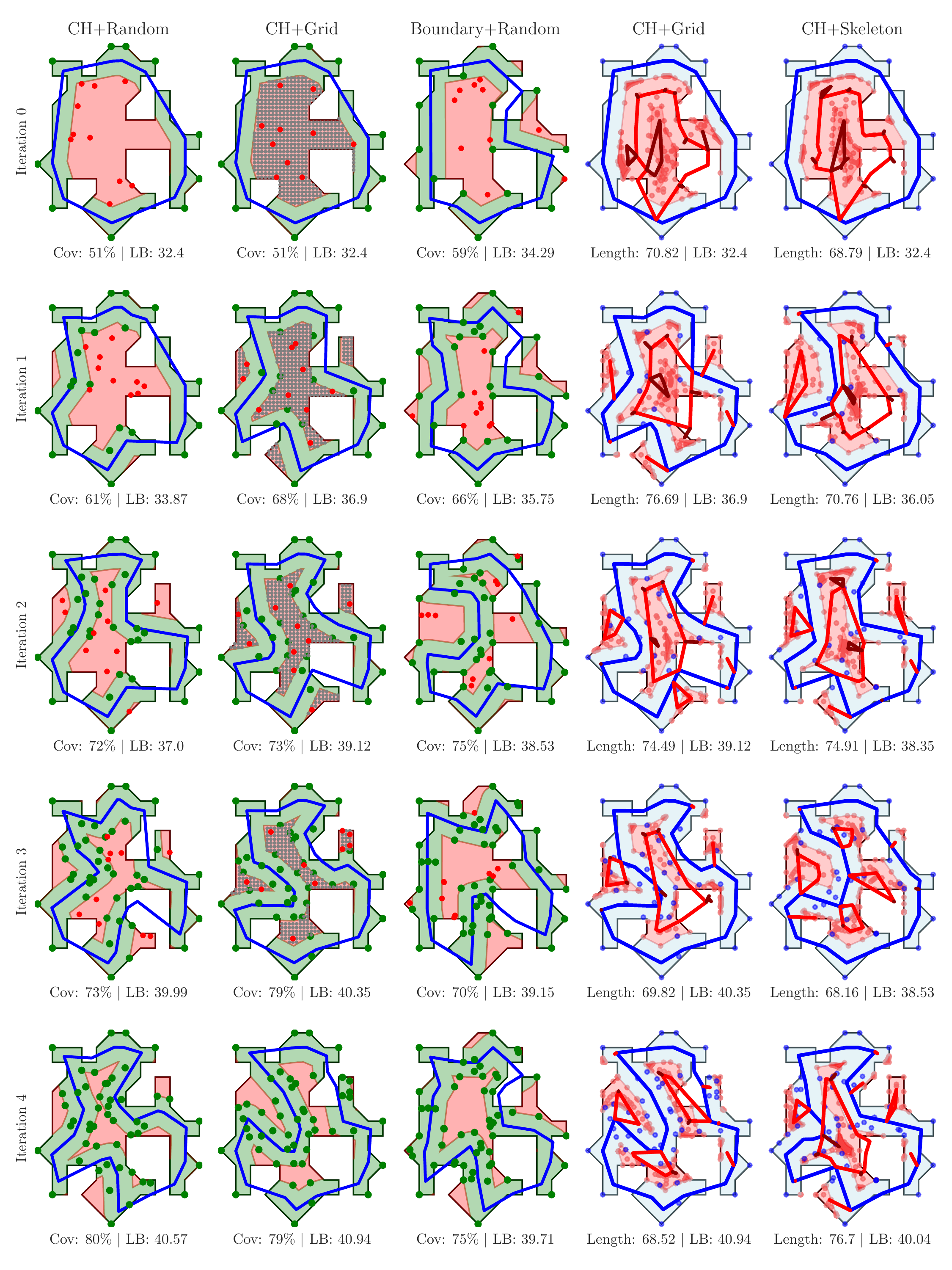}
  \caption{Examples of different witness placement strategies.
  The examples show just the first 5 iterations of the lower bound on the left and of the upper bound on the right.
  The trajectory is highlighted in blue, the used witnesses in green.
  The lower bounds show the covered area in green (lower bound) resp.\ blue (upper bound) and missed area in red.
  The light red tours for the upper bounds show the tour extensions after the first iteration, and the dark red tours later extensions.
  The new witnesses for the next iteration are highlighted in red, while the corresponding candidate set is visualized by gray points.
  The best lower bound achieved in these 5 iterations is \num{40.94} and the best upper bound \num{68.16}.}
  \label{fig:eval:witnessexamples:ubexamples}
\end{figure*}

\section{Conclusion}
The Lawn Mowing Problem generalizes a number of notoriously difficult geometric
optimization problems, making it both important and highly challenging.
We have provided progress on several aspects of the LMP,
both on the theoretical and the practical side, demonstrating (for the first
time ever) that practically useful progress may be achievable.

A spectrum of open problems remain. 
On the theoretical side, we already stated \cref{prob1} and \cref{prob2}.
On the practical side, two critical questions are the development
of additional approaches for speeding up the computation
of lower bounds, as well as more efficient ways to compute upper bounds, 
along with additional local improvement heuristics.
In addition, the concept of $\varepsilon$-robustness deserves further exploration.

\FloatBarrier
\cleardoublepage
\bibliography{main}

\begin{thebibliography}{10}

\bibitem{abrahamsen2017irrational}
Mikkel Abrahamsen, Anna Adamaszek, and Tillmann Miltzow.
\newblock Irrational guards are sometimes needed.
\newblock In {\em Symposium on Computational Geometry (SoCG)}, pages 3:1--3:15,
  2017.
\newblock \href {https://doi.org/10.4230/LIPIcs.SoCG.2017.3}
  {\path{doi:10.4230/LIPIcs.SoCG.2017.3}}.

\bibitem{AbrahamsenAM22}
Mikkel Abrahamsen, Anna Adamaszek, and Tillmann Miltzow.
\newblock The art gallery problem is {$\exists\R$}-complete.
\newblock {\em Journal of the {ACM}}, 69(1):4:1--4:70, 2022.
\newblock \href {https://doi.org/10.1145/3486220} {\path{doi:10.1145/3486220}}.

\bibitem{agarwal2011computational}
Pankaj~K. Agarwal, Kurt Mehlhorn, and Monique Teillaud.
\newblock {Computational Geometry (Dagstuhl Seminar 11111)}.
\newblock {\em Dagstuhl Reports}, 1(3), 2011.
\newblock \href {https://doi.org/10.4230/DagRep.1.3.19}
  {\path{doi:10.4230/DagRep.1.3.19}}.

\bibitem{aichholzer1996novel}
Oswin Aichholzer, Franz Aurenhammer, David Alberts, and Bernd G{\"a}rtner.
\newblock A novel type of skeleton for polygons.
\newblock In {\em J. UCS The Journal of Universal Computer Science}, pages
  752--761. Springer, 1996.

\bibitem{andersen2003implementing}
Erling~D. Andersen, Cornelis Roos, and Tam{\'{a}}s Terlaky.
\newblock On implementing a primal-dual interior-point method for conic
  quadratic optimization.
\newblock {\em Mathematical Programming}, 95(2):249--277, 2003.
\newblock \href {https://doi.org/10.1007/s10107-002-0349-3}
  {\path{doi:10.1007/s10107-002-0349-3}}.

\bibitem{arkin1993lawnmower}
Esther~M. Arkin, S{\'{a}}ndor~P. Fekete, and Joseph S.~B. Mitchell.
\newblock The lawnmower problem.
\newblock In {\em Canadian Conference on Computational Geometry (CCCG)}, pages
  461--466, 1993.

\bibitem{Arkin2000}
Esther~M. Arkin, S{\'a}ndor~P. Fekete, and Joseph S.~B. Mitchell.
\newblock Approximation algorithms for lawn mowing and milling.
\newblock {\em Computational Geometry}, 17:25--50, 2000.
\newblock \href {https://doi.org/10.1016/S0925-7721(00)00015-8}
  {\path{doi:10.1016/S0925-7721(00)00015-8}}.

\bibitem{arkin1994approximation}
Esther~M. Arkin and Refael Hassin.
\newblock Approximation algorithms for the geometric covering salesman problem.
\newblock {\em Discrete Applied Mathematics}, 55(3):197--218, 1994.
\newblock \href {https://doi.org/10.1016/0166-218X(94)90008-6}
  {\path{doi:10.1016/0166-218X(94)90008-6}}.

\bibitem{Arkin2000ZigZag}
Esther~M. Arkin, Martin Held, and Christopher~L. Smith.
\newblock Optimization problems related to zigzag pocket machining.
\newblock {\em Algorithmica}, 26(2):197--236, 2000.
\newblock \href {https://doi.org/10.1007/s004539910010}
  {\path{doi:10.1007/s004539910010}}.

\bibitem{arora1998polynomial}
Sanjeev Arora.
\newblock Polynomial time approximation schemes for euclidean traveling
  salesman and other geometric problems.
\newblock {\em Journal of the ACM}, 45(5):753--782, 1998.
\newblock \href {https://doi.org/10.1145/290179.290180}
  {\path{doi:10.1145/290179.290180}}.

\bibitem{bahnemannrevisiting}
Rik B{\"a}hnemann, Nicholas Lawrance, Jen~Jen Chung, Michael Pantic, Roland
  Siegwart, and Juan Nieto.
\newblock Revisiting {B}oustrophedon coverage path planning as a generalized
  traveling salesman problem.
\newblock In {\em Field and Service Robotics}, pages 277--290, 2021.
\newblock \href {https://doi.org/10.1007/978-981-15-9460-1_20}
  {\path{doi:10.1007/978-981-15-9460-1_20}}.

\bibitem{baumgartner2010exact}
Tobias Baumgartner, S{\'a}ndor~P. Fekete, Alexander Kr{\"o}ller, and Christiane
  Schmidt.
\newblock Exact solutions and bounds for general {Art Gallery Problems}.
\newblock In {\em Symposium on Algorithm Engineering and Experiments (ALENEX)},
  pages 11--22, 2010.
\newblock \href {https://doi.org/10.1137/1.9781611972900.2}
  {\path{doi:10.1137/1.9781611972900.2}}.

\bibitem{Becker}
Aaron~T. Becker, Mustapha Debboun, S{\'a}ndor~P. Fekete, Dominik Krupke, and
  An~Nguyen.
\newblock Zapping {Z}ika with a mosquito-managing drone: Computing optimal
  flight patterns with minimum turn cost.
\newblock In {\em Symposium on Computational Geometry (SoCG)}, pages
  62:1--62:5, 2017.
\newblock \href {https://doi.org/10.4230/LIPIcs.SoCG.2017.62}
  {\path{doi:10.4230/LIPIcs.SoCG.2017.62}}.

\bibitem{behdani2014integer}
Behnam Behdani and J.~Cole Smith.
\newblock An integer-programming-based approach to the close-enough {Traveling
  Salesman Problem}.
\newblock {\em INFORMS Journal on Computing}, 26(3):415--432, 2014.
\newblock \href {https://doi.org/10.1287/ijoc.2013.0574}
  {\path{doi:10.1287/ijoc.2013.0574}}.

\bibitem{bezdek1979korok}
K{\'a}roly Bezdek.
\newblock {\em K{\"o}r{\"o}k optim{\'a}lis fed{\'e}sei (Optimal Covering of
  Circles)}.
\newblock PhD thesis, E\"{o}tv\"{o}s Lorand University, 1979.

\bibitem{bezdek1984einige}
K{\'a}roly Bezdek.
\newblock {\"U}ber einige optimale {K}onfigurationen von {K}reisen.
\newblock {\em Ann. Univ. Sci. Budapest Rolando E{\"o}tv{\"o}s Sect. Math},
  27:143--151, 1984.

\bibitem{bormann2015new}
Richard Bormann, Joshua Hampp, and Martin H{\"a}gele.
\newblock New brooms sweep clean--{A}n autonomous robotic cleaning assistant
  for professional office cleaning.
\newblock In {\em International Conference on Robotics and Automation (ICRA)},
  pages 4470--4477, 2015.
\newblock \href {https://doi.org/10.1109/ICRA.2015.7139818}
  {\path{doi:10.1109/ICRA.2015.7139818}}.

\bibitem{bdd-pgpc-13}
Dorit Borrmann, Pedro~J. de~Rezende, Cid~C. de~Souza, S{\'{a}}ndor~P. Fekete,
  Stephan Friedrichs, Alexander Kr{\"{o}}ller, Andreas N{\"{u}}chter,
  Christiane Schmidt, and Davi~C. Tozoni.
\newblock Point guards and point clouds: Solving general {Art Gallery
  Problems}.
\newblock In {\em Symposium on Computational Geometry (SoCG)}, pages 347--348,
  2013.
\newblock \href {https://doi.org/10.1145/2462356.2462361}
  {\path{doi:10.1145/2462356.2462361}}.

\bibitem{cabreira2019survey}
Tau{\~a}~M. Cabreira, Lisane~B. Brisolara, and Paulo~R. Ferreira~Jr.
\newblock Survey on coverage path planning with unmanned aerial vehicles.
\newblock {\em Drones}, 3(1):4, 2019.
\newblock \href {https://doi.org/10.3390/drones3010004}
  {\path{doi:10.3390/drones3010004}}.

\bibitem{choset2001coverage}
Howie Choset.
\newblock Coverage for robotics--a survey of recent results.
\newblock {\em Annals of Mathematics and Artificial Intelligence},
  31(1):113--126, 2001.
\newblock \href {https://doi.org/10.1023/A:1016639210559}
  {\path{doi:10.1023/A:1016639210559}}.

\bibitem{choset1998coverage}
Howie Choset and Philippe Pignon.
\newblock Coverage path planning: The {B}oustrophedon cellular decomposition.
\newblock In {\em Field and Service Robotics}, pages 203--209, 1998.
\newblock \href {https://doi.org/10.1007/978-1-4471-1273-0_32}
  {\path{doi:10.1007/978-1-4471-1273-0_32}}.

\bibitem{coutinho2016branch}
Walton~P. Coutinho, Roberto Quirino~do Nascimento, Artur~A. Pessoa, and Anand
  Subramanian.
\newblock A branch-and-bound algorithm for the {Close-Enough Traveling Salesman
  Problem}.
\newblock {\em INFORMS Journal on Computing}, 28(4):752--765, 2016.
\newblock \href {https://doi.org/10.1287/ijoc.2016.0711}
  {\path{doi:10.1287/ijoc.2016.0711}}.

\bibitem{rezende2016}
Pedro~J. de~Rezende, Cid~C. de~Souza, Stephan Friedrichs, Michael Hemmer,
  Alexander Kr{\"{o}}ller, and Davi~C. Tozoni.
\newblock Engineering art galleries.
\newblock In {\em Algorithm Engineering - Selected Results and Surveys}, pages
  379--417, 2016.
\newblock \href {https://doi.org/10.1007/978-3-319-49487-6\_12}
  {\path{doi:10.1007/978-3-319-49487-6\_12}}.

\bibitem{dumitrescu2017constant}
Adrian Dumitrescu and Csaba~D. T{\'o}th.
\newblock Constant-factor approximation for {TSP} with disks.
\newblock In {\em A Journey Through Discrete Mathematics}, pages 375--390.
  Springer, 2017.
\newblock \href {https://doi.org/10.1007/978-3-319-44479-6_15}
  {\path{doi:10.1007/978-3-319-44479-6_15}}.

\bibitem{eder2020salzburg}
G{\"u}nther Eder, Martin Held, Stein{\th}{\'o}r Jasonarson, Philipp Mayer, and
  Peter Palfrader.
\newblock Salzburg database of polygonal data: Polygons and their generators.
\newblock {\em Data in Brief}, 31:105984, 2020.
\newblock \href {https://doi.org/10.1016/j.dib.2020.105984}
  {\path{doi:10.1016/j.dib.2020.105984}}.

\bibitem{elber1999offsets}
Gershon Elber and Myung-Soo Kim.
\newblock Offsets, sweeps and {Minkowski sums}.
\newblock {\em Computer-Aided Design}, 31(3), 1999.
\newblock \href {https://doi.org/10.1016/S0010-4485(99)00012-3}
  {\path{doi:10.1016/S0010-4485(99)00012-3}}.

\bibitem{englot2012sampling}
Brendan Englot and Franz Hover.
\newblock Sampling-based coverage path planning for inspection of complex
  structures.
\newblock In {\em International Conference on Automated Planning and Scheduling
  (ICAPS)}, volume~22, pages 29--37. AAAI, 2012.
\newblock URL:
  \url{http://www.aaai.org/ocs/index.php/ICAPS/ICAPS12/paper/view/4728}.

\bibitem{2020-Covering_SoCG}
S{\'{a}}ndor~P. Fekete, Utkarsh Gupta, Phillip Keldenich, Christian Scheffer,
  and Sahil Shah.
\newblock Worst-case optimal covering of rectangles by disks.
\newblock In {\em Symposium on Computational Geometry (SoCG)}, pages
  42:1--42:23, 2020.
\newblock \href {https://doi.org/10.4230/LIPIcs.SoCG.2020.42}
  {\path{doi:10.4230/LIPIcs.SoCG.2020.42}}.

\bibitem{75-Covervideo_SoCG}
S{\'{a}}ndor~P. Fekete, Phillip Keldenich, and Christian Scheffer.
\newblock Covering rectangles by disks: The video.
\newblock In {\em Symposium on Computational Geometry (SoCG)}, pages
  71:1--75:5, 2020.
\newblock \href {https://doi.org/10.4230/LIPIcs.SoCG.2020.75}
  {\path{doi:10.4230/LIPIcs.SoCG.2020.75}}.

\bibitem{fms-mctc-10}
S{\'a}ndor~P. Fekete, Joseph S.~B. Mitchell, and Christiane Schmidt.
\newblock Minimum covering with travel cost.
\newblock {\em Journal of Combinatorial Optimization}, 20:32--51, 2012.
\newblock \href {https://doi.org/10.1007/s10878-010-9303-0}
  {\path{doi:10.1007/s10878-010-9303-0}}.

\bibitem{fp-tbms-98}
S{\'a}ndor~P. Fekete and William~R. Pulleyblank.
\newblock Traveling the boundary of {M}inkowski sums.
\newblock {\em Inf. Process. Lett.}, 66(4):171--174, 1998.
\newblock \href {https://doi.org/10.1016/S0020-0190(98)00053-2}
  {\path{doi:10.1016/S0020-0190(98)00053-2}}.

\bibitem{friedman1014}
Erich Friedman.
\newblock Circles covering squares web page.
\newblock \url{https://erich-friedman.github.io/packing/circovsqu}, 2014.
\newblock Online, accessed December 02, 2021.

\bibitem{galceran2013survey}
Enric Galceran and Marc Carreras.
\newblock A survey on coverage path planning for robotics.
\newblock {\em Robotics and Autonomous Systems}, 61(12):1258--1276, 2013.
\newblock \href {https://doi.org/10.1016/j.robot.2013.09.004}
  {\path{doi:10.1016/j.robot.2013.09.004}}.

\bibitem{gendreau1997covering}
Michel Gendreau, Gilbert Laporte, and Fr{\'e}d{\'e}ric Semet.
\newblock The covering tour problem.
\newblock {\em Operations Research}, 45(4):568--576, 1997.
\newblock \href {https://doi.org/10.1287/opre.45.4.568}
  {\path{doi:10.1287/opre.45.4.568}}.

\bibitem{Held1991}
Martin Held.
\newblock {\em On the Computational Geometry of Pocket Machining}, volume 500
  of {\em LNCS}.
\newblock Springer, 1991.
\newblock \href {https://doi.org/10.1007/3-540-54103-9}
  {\path{doi:10.1007/3-540-54103-9}}.

\bibitem{Held}
Martin Held, G{\'a}bor Luk{\'a}cs, and L{\'a}szl{\'o} Andor.
\newblock Pocket machining based on contour-parallel tool paths generated by
  means of proximity maps.
\newblock {\em Computer-Aided Design}, 26(3):189--203, 1994.
\newblock \href {https://doi.org/10.1016/0010-4485(94)90042-6}
  {\path{doi:10.1016/0010-4485(94)90042-6}}.

\bibitem{heppes1997covering}
Alad{\'a}r Heppes and Hans Melissen.
\newblock Covering a rectangle with equal circles.
\newblock {\em Periodica Mathematica Hungarica}, 34(1-2):65--81, 1997.
\newblock \href {https://doi.org/10.1023/A:1004224507766}
  {\path{doi:10.1023/A:1004224507766}}.

\bibitem{jensen2020near}
Katharin~R. Jensen-Nau, Tucker Hermans, and Kam~K. Leang.
\newblock Near-optimal area-coverage path planning of energy-constrained aerial
  robots with application in autonomous environmental monitoring.
\newblock {\em IEEE Transactions on Automation Science and Engineering},
  18(3):1453--1468, 2021.
\newblock \href {https://doi.org/10.1109/TASE.2020.3016276}
  {\path{doi:10.1109/TASE.2020.3016276}}.

\bibitem{kroller2012exact}
Alexander Kr{\"o}ller, Tobias Baumgartner, S{\'a}ndor~P. Fekete, and Christiane
  Schmidt.
\newblock Exact solutions and bounds for general {Art Gallery Problems}.
\newblock {\em Journal of Experimental Algorithmics}, 17:2.1--2.23, 2012.
\newblock \href {https://doi.org/10.1145/2133803.2184449}
  {\path{doi:10.1145/2133803.2184449}}.

\bibitem{melissen2000covering}
Johannes B.~M. Melissen and Peter~C. Schuur.
\newblock Covering a rectangle with six and seven circles.
\newblock {\em Discrete Applied Mathematics}, 99(1-3):149--156, 2000.
\newblock \href {https://doi.org/10.1016/S0166-218X(99)00130-4}
  {\path{doi:10.1016/S0166-218X(99)00130-4}}.

\bibitem{mennell2009heuristics}
William~K. Mennell.
\newblock {\em Heuristics for Solving Three Routing Problems: {Close-Enough
  Traveling Salesman Problem}, {Close-Enough Vehicle Routing Problem},
  Sequence-Dependent {Team Orienteering Problem}}.
\newblock PhD thesis, University of Maryland, 2009.
\newblock URL: \url{http://hdl.handle.net/1903/9822}.

\bibitem{mitchell1999guillotine}
Joseph S.~B. Mitchell.
\newblock Guillotine subdivisions approximate polygonal subdivisions: A simple
  polynomial-time approximation scheme for geometric tsp, k-mst, and related
  problems.
\newblock {\em SIAM Journal on Computing}, 28(4):1298--1309, 1999.
\newblock \href {https://doi.org/10.1137/S0097539796309764}
  {\path{doi:10.1137/S0097539796309764}}.

\bibitem{Mitchell07}
Joseph S.~B. Mitchell.
\newblock A {PTAS} for {TSP} with neighborhoods among fat regions in the plane.
\newblock In {\em Symposium on Discrete Algorithms (SODA)}, pages 11--18, 2007.
\newblock URL: \url{https://dl.acm.org/doi/10.5555/1283383.1283385}.

\bibitem{neville1915solution}
Eric~H. Neville.
\newblock On the solution of numerical functional equations.
\newblock {\em Proceedings of the London Mathematical Society}, 2(1):308--326,
  1915.
\newblock \href {https://doi.org/10.1112/plms/s2_14.1.308}
  {\path{doi:10.1112/plms/s2_14.1.308}}.

\bibitem{oksanen2009coverage}
Timo Oksanen and Arto Visala.
\newblock Coverage path planning algorithms for agricultural field machines.
\newblock {\em Journal of Field Robotics}, 26(8):651--668, 2009.
\newblock \href {https://doi.org/10.1002/rob.20300}
  {\path{doi:10.1002/rob.20300}}.

\bibitem{szalkai2016optimal}
Bal{\'a}zs Szalkai.
\newblock Optimal cover of a disk with three smaller congruent disks.
\newblock {\em Advances in Geometry}, 16(4):465--476, 2016.
\newblock \href {https://doi.org/10.1515/advgeom-2016-0018}
  {\path{doi:10.1515/advgeom-2016-0018}}.

\bibitem{toth2005thinnest}
G{\'a}bor~F. T{\'o}th.
\newblock Thinnest covering of a circle by eight, nine, or ten congruent
  circles.
\newblock {\em Combinatorial and Computational Geometry}, 52:361--376, 2005.

\bibitem{yuan2017towards}
Bo~Yuan and Tiantian Zhang.
\newblock Towards solving {TSPN} with arbitrary neighborhoods: A hybrid
  solution.
\newblock In {\em Australasian Conference on Artificial Life and Computational
  Intelligence}, pages 204--215, 2017.
\newblock \href {https://doi.org/10.1007/978-3-319-51691-2\_18}
  {\path{doi:10.1007/978-3-319-51691-2\_18}}.

\end{thebibliography}
\clearpage
\appendix

\onecolumn
\section{Pseudocode for the algorithm}

\renewcommand{\algorithmicrequire}{\textbf{Input:}}
\renewcommand{\algorithmicensure}{\textbf{Output:}}

  \begin{algorithm}
  \caption{Pseudocode for the primal-dual algorithm.}
  \label{algo:cetsp-feasible}

  \begin{algorithmic}
  \Require{A polygon $\Pol$, a cutter $C$ with radius $r$, max iterations $max_i$}
  \Ensure{A feasible tour $\T$ and a lower bound $b_l$}
    \State $b_l = 0$
    \State $\T_{best}= \emptyset$
    \State Initialize $W_0$ with some points

    \For{$i\gets0, 1, \dots, max_i-1$}
      \If{$\T_{best} = \emptyset \hspace{1mm} \lor \hspace{1mm} \ell(\T_{best}) > b_l$}
        \State Solve CETSP for centers $W_i$ and $r_k=r$ $(0\leq k < |W_i|)$ to get $\T_i$ and a lower bound $b^{CETSP}_l$
        \State $b_l = \max \{b^{CETSP}_l, b_l\}$
        \State $W_{i+1} = W_i$

        \Do
          \State $uncoveredRegions = \Pol \setminus (\T_i \oplus C)$
          \ForAll{$R_j \in uncoveredRegions$}
              \State Initialize $W_{R_j}$ from $R_j$
              \State $p = \argmin_{p \in \partial\T_i} d(p, R_j)$
              \State Solve CETSP for centers $W_{R_j} \cup \{p\}$ and $r_k=r$,$r_p=0$ $(0\leq k < |W_{R_j}|)$ to get \State $\T_{R_j}$
              \State $W_{i+1} = W_{i+1} \cupdot W_{R_j}$
              \State Connect $\T_i$ and $\T_{R_j}$ via $p$
          \EndFor
        \doWhile{$uncoveredRegions\neq\emptyset$}

        \If{$\T_{best} = \emptyset$}
          \State $\T_{best} = \T_i$
        \Else
          \State $\T_{best} = \argmin_{T \in \{\T_i, \T_{best}\}} \ell(T)$
        \EndIf
      \EndIf
    \EndFor
    \State \Return {$\T_{best}$, $b_l$}
    \end{algorithmic}
  \end{algorithm}

\onecolumn
\section{Illustrations of instances and solution progress}

\begin{figure*}[h!]
  \centering
  \includegraphics[width=.8\linewidth]{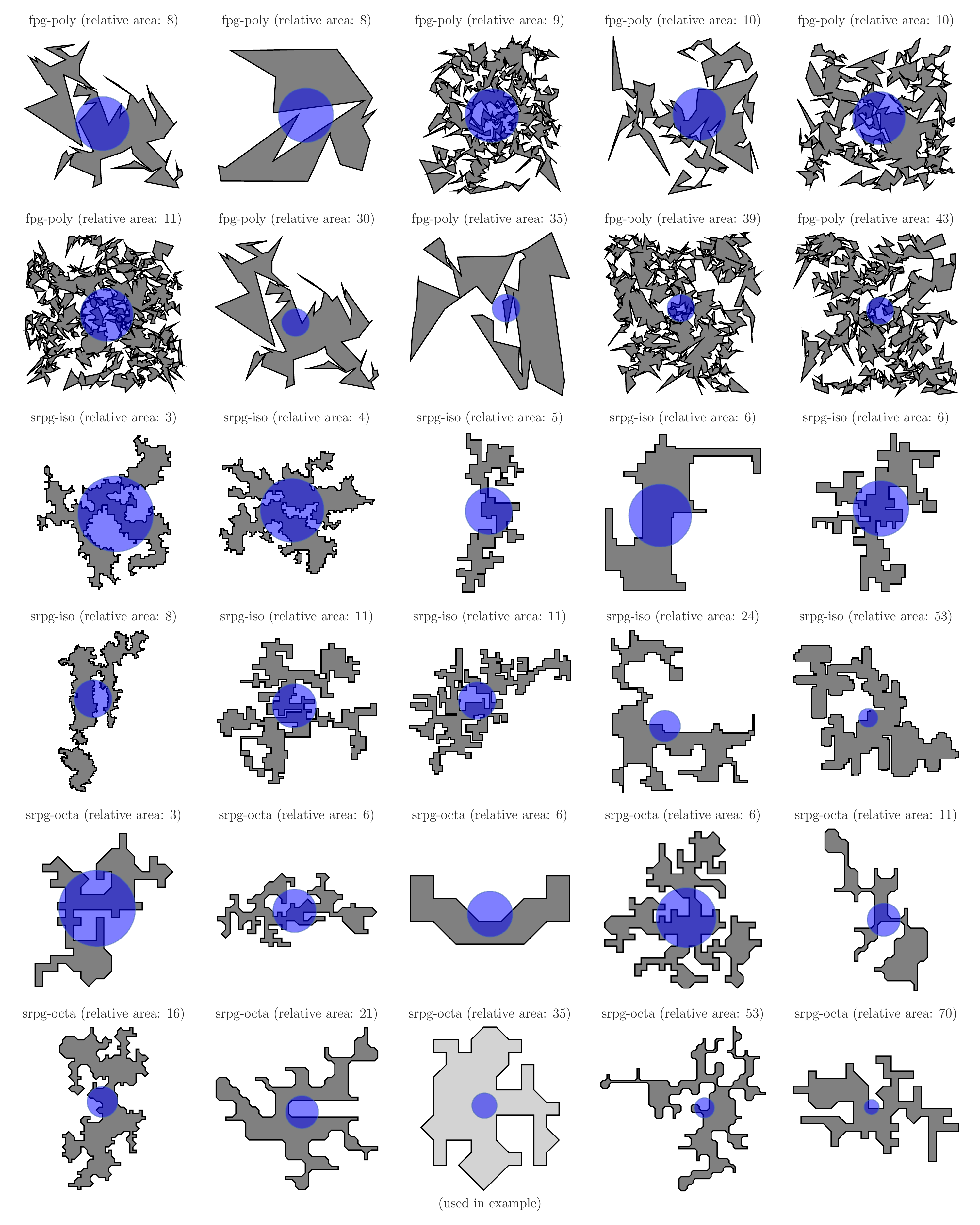}
  \caption{Examples of the used polygons. The blue circles show the respective tool size; the
\emph{relative area} denotes the ratio of convex hull area and cutter area.}
\label{fig:benchmarks}
\end{figure*}


\begin{figure*}
  \centering
  \includegraphics[width=\columnwidth]{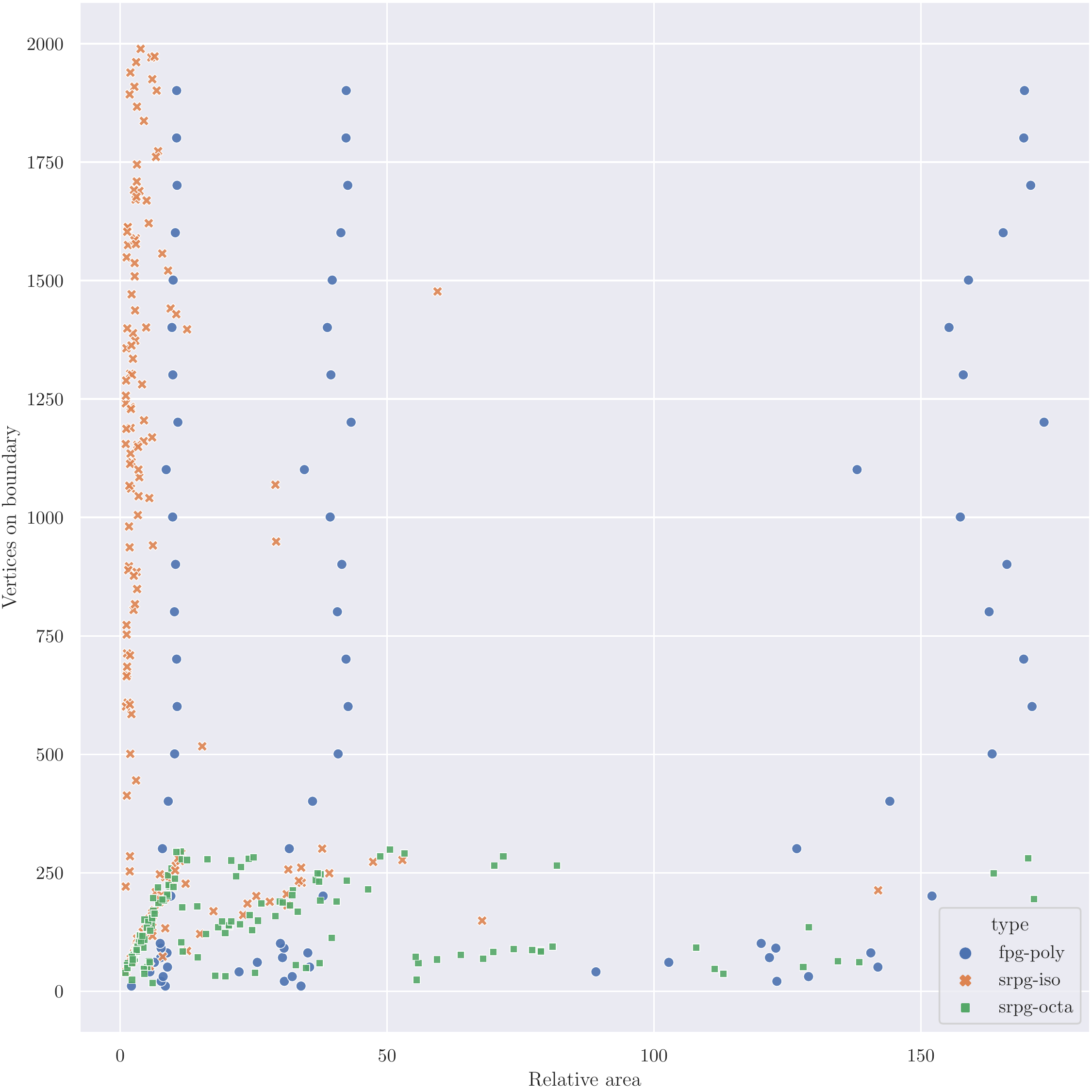}
  \caption{Enlarged distribution of instances, subdivided into different types.}
  \label{fig:distribution-enlarged}
\end{figure*}

\end{document}